\newtheorem{defin}{Definition}
\newtheorem{rem}{Remark}
\theoremstyle{plain}
\newtheorem{theorem}{Theorem}
\newtheorem{assum}{Assumption}
\newtheorem{lemma}{Lemma}
\newcommand{\tabitem}{~~\llap{\textbullet}~~}
\title{Secure and trusted white-box verification} 
\author{
Yixian Cai, George Karakostas, Alan Wassyng \\
McMaster University, \\
Dept. of Computing and Software, \\
1280 Main St. West,
Hamilton, Ontario L8S 4K1, Canada, \\
      {\tt \{caiy35,karakos,wassyng\}$@$mcmaster.ca}}
\begin{document}

\maketitle

\begin{abstract}
Verification is the process of checking whether a product has been implemented according to its prescribed specifications.
We study the case of a designer (the developer) that needs to verify its design by a third party (the verifier),
by making publicly available a limited amount of information about the design, namely a diagram of interconnections
between the different design components, but not the components themselves or the intermediate
values passed between components. We formalize this notion of limited information using tabular expressions
as the description method for both the specifications and the design. Treating verification as a process akin to testing,
we develop protocols that allow for the design to be verified on a set of inputs generated by the verifier, using any test-case
generating algorithm that can take advantage of this extra available information (partially white-box testing), and without
compromising the developer's secret information. Our protocols work with both trusted and untrusted developers, and
allow for the checking of the correctness of the verification process itself by any third party, and at any time. 
\end{abstract}

\section{Introduction}

The following scenario is a rather common example of the design and development of a product: A car manufacturer needs
to incorporate an engine component, made by a third subcontracting party, into its car design. In order to do that, it first 
produces the {\em specifications} of the component (described in a standard way), and announces them to 
the subcontractor. Then the latter implements its own design, and delivers the component, 
claiming that it complies with the specs. The car manufacturer, in turn, {\em verifies} that the claim is true,
by running a set of {\em test cases} on the component, and confirming that the outputs produced 
for these test inputs are consistent
with the specs. After the car design has been completed, a government agency needs to approve it, by running its own
tests and confirming that the outputs comply with its own publicly-known set of environmental specs; once this happens,
the car can be mass-produced and is allowed to be on the road. The only problem is that both
the car manufacturer and the subcontractor {\em do not want to reveal} their designs to any third party, since they contain
proprietary information, such as the setting of certain parameters that took years of experimentation to fine tune; therefore, 
and with only the specs {\em publicly} available, only black-box testing can be performed by the car manufacturer (vis-a-vis the 
subcontractor's component), and the government agency (vis-a-vis the car). It may be the case, though, that the car manufacturer
and/or the subcontractor are willing to reveal some partial information about their implementation(s) (possibly information
that any good mechanic would figure out anyways given some time, such as which part is connected to some other part). 
In this case, an obvious question is the following: "Can we allow for testing, that possibly
takes advantage of this extra information, {\em without revealing any information beyond that}?" The answer to this question
is rather obvious for the two extremes of (i) completely hidden information (only {\em black-box} testing is possible), 
and (ii) completely
revealed information ({\em completely white-box} testing is possible). In this work, we present a method of answering the question 
in the affirmative in the continuum between these two extremes, i.e., for partially white-box testing. 

More specifically, we are going to restrict our exposition to the case of Software Engineering, i.e., to software products, for 
clarity reasons, although our methods apply to any design whose specs and component interaction can be described using
a standard description method, like tabular expressions \cite{alspaugh1992software}. 
There are two parties, i.e., a {\em software developer} (or 
{\em product designer} - we will use these terms interchangeably in what follows), and a {\em verifier}, as well as a publicly
known set of {\em specifications} (or {\em specs}), also in a standard description. The two parties engage in a protocol, whose
end-result must be the {\em acceptance} or {\em rejection} of the design by the verifier, and satisfies the following two properties:
\begin{itemize}
\item {\bf Correctness:} If the two parties follow the protocol, and all test cases produced and tried by the verifier comply 
with the specs, then the verifier accepts, otherwise it rejects.
\item {\bf Security:} By following the protocol, the developer doesn't reveal any bits of information in addition to the information 
that is publicly available.
\end{itemize}
In this setting, the specifications may come from the developer, the verifier, or a third party, but are always public knowledge.
The algorithm that produces the test cases tried by the verifier is run by the latter, and is also publicly known and chosen
ahead of time. Depending on the application, the developer may be either {\em trusted} ({\em honest}), i.e., follows the 
protocol exactly, providing always the correct (encrypted) replies to the verifier's queries, or {\em untrusted} ({\em dishonest})
otherwise, and similarly for the verifier.    

In this work, first we give a concrete notion of {\em partial} white-box testing, using tabular expressions as the description
tool of choice. In its simpler form, a tabular expression is essentially a table with rows, with each row containing a left-hand
side predicate on the table inputs, and a right-hand computation that produces the table outputs. The RHS of a row is
executed iff the LHS predicate is evaluated to {\sc TRUE}. A table describes the functionality of a design component; the
description of the whole design is done by constructing a {\em directed acyclic table graph}, 
representing the interconnection of tables
(nodes of the graph) as the output of one is used as an input to another (edges of the graph). Our crucial assumption is
that the table graph is exactly the {\em extra publicly-known information} made available by the developer (in addition to the
single publicly-known table describing the specifications); the contents
of the tables-nodes, as well as the actual intermediate table input/output values should remain {\em secret}. Our goal is
to facilitate the testing and compliance verification of designs that are comprised by many components, which may be 
off-the-shelf or implemented by third parties; if these components are trusted to comply with their specs by the developer 
(by, e.g., passing a similar verification process), then all the developer needs to know, in order to proceed with the
verification of the whole design, is their input/output functionality, and not the particulars of their implementation. The formal
definition of table graphs can be found in Section \ref{sub:tables}.
   
We present protocols that allow the verifier to run test case-generating algorithms that may take advantage of this
extra available information (e.g., MC/DC \cite{hayhurst2001practical}), and satisfy the correctness 
and security properties (formally defined in
Section \ref{evaluate}) with high probability (whp). We break the task of verification in a set of algorithms for the
encryption of table contents and intermediate inputs/outputs (run by the developer), and an algorithm for checking the
validity of the verification (cf. Definition \ref{defe}); the former ensure the security property, while the latter will force
the verifier to be honest, and allow any third party to verify the correctness of the verification process at any future time. 
Initially we present a protocol for the case of a trusted developer (Section \ref{sec:honest}), 
followed by a protocol for the case of a dishonest one (Section \ref{sec:maldev}). Our main cryptographic tool is
\emph{Fully Homomorphic Encryption} (FHE), a powerful encryption concept that allows computation over 
encrypted data first implemented in~\cite{gentry2009fully}. We also employ bit-commitment protocols \cite{naor1991bit}
in Section \ref{sec:maldev}. 

{\bf Previous work} The goal of {\em obfuscation}, i.e., hiding the code of a program, while maintaining its original functionality,
is very natural, and has been the focus of research for a long time. An obfuscated program reveals no information about the 
original program, other than what can be figured out by having black-box access to the original program. There are many 
heuristic obfuscation methods, e.g., \cite{collberg1997}, \cite{wang2000software}. However, as shown in \cite{barak2001}, 
an obfuscation algorithm that strictly satisfies the definition of obfuscation does not exist. Hence, all obfuscation algorithms 
can only achieve obfuscation to the extent of making obfuscated programs hard to reverse-engineer, while non black-box
information about the original program cannot be guaranteed to be completely secret. 

A crucial aspect of our work is the exclusion of any third-parties that act as authenticators, or guarantors, or
a dedicated server controlled by both parties like the one in \cite{chaki13}, called an `amanat'. In other words, 
we require that there is no shared resource between the two interacting parties, except the communication channel
and the public specs. This reduces the opportunities for attacks (e.g., a malicious agent taking over 
a server, or a malicious guarantor), since they are reduced to attacking the channel, a very-well and studied problem.   

Though ideal obfuscation is impossible to achieve, there are still some other cryptographic primitives that can be used to 
protect the content of a program. One is {\em point function obfuscation} \cite{lynn2004positive}, which allows for the 
obfuscation of a point function (i.e., a function that outputs 1 for one specific input, and 0 otherwise), but not arbitrary functions. 
Another relevant cryptographic primitive is Yao's \emph{garbled circuits} \cite{yao1982protocols}, which allow the computation
of a circuit's output on a given input, without revealing any information about the circuit or the input. Yao's garbled circuits 
nearly achieve what obfuscation requires in terms of one-time usage, and are the foundation of work like one-time programs 
\cite{goldwasser2008one}. Its problem is that the circuit encryption (i.e., the garbled circuit) can be run only once without
compromising security. Recently, \cite{goldwasser2013reusable} proposed a new version of garbled circuits called 
{\em reusable garbled circuits}, which allows for a many-time usage on the same garbled circuit, and still satisfies the 
security of Yao's garbled circuits. \cite{goldwasser2013reusable} then uses reusable garbled circuits to implement 
{\em token-based obfuscation}, i.e., obfuscation that allows a program user to run the obfuscated program on any input, 
without having to ask the developer who obfuscated the program to encode the input before it is used. Token-based obfuscation
guarantees that only black-box information about the original program can be figured out from the obfuscated program. 
Unfortunately, when a program is token-based obfuscated, it becomes a black box to the user (due to the inherent nature of obfuscation), and thus precludes its use for any kind of white-box verification. Building on \cite{goldwasser2013reusable},
our work proposes a method to alleviate this weakness.
   
Plenty of work has been done in the field of {\em verifiable computing}, which is also relevant to our work. Verifiable computing allows one party (the prover) to generate the verifiable computation of a function, whose correctness a second party (a verifier - 
not related to the verifier in our setting) can then verify. For example, a naive way to do this, is to ask the verifier to repeat the
computation of the function; however, in many cases this solution is both inefficient and incorrect. Recently, works like 
\cite{cormode2012practical}, \cite{thaler2012verifiable}, \cite{vu2013hybrid}, \cite{setty2012making}, \cite{setty2012taking},
\cite{parno2013pinocchio}, \cite{ben2013snarks} which are based on the PCP theorem \cite{arora1998probabilistic}, presented
systems that allow the verifier to verify the computation result without re-executing the function. The difference between verifiable computing and our work is that the developer wishes to hide the implementation of a design, while for verifiable computing, 
the design implementation cannot be hidden. However, verifiable computing still has the potential to be applied to the
implementation of secure and trusted verification.


\section{Preliminaries}  \label{sec:prelim}

\subsection{Tabular Expressions (Tables)}  \label{sub:tables}

\emph{Tabular expressions} (or simply {\em tables}, as we are going to call them in this work) are used for the documentation of software programs (or, more generally, engineering designs). The concept was first introduced by David Parnas in the 1970s (cf. \cite{alspaugh1992software} and the references therein), and since then there has been a proliferation of different semantics and syntax variations developed. In this work we use a simple variation~\cite{wassyng2003tabular}, already used in the development of critical software. Figure \subref*{table} describes the structure of such a table. The {\em Conditions} column contains predicates $p_i(x)$, and the {\em Functions} column contains functions $f_i(x)$ (constant values are regarded as zero-arity functions). The table works as follows: If $p_i(x)=True$ for an input $x$, then $f_i(x)$ will be the output $T(x)$ of table $T$. To work properly, the predicates in $T$ must satisfy the {\em disjointness} and {\em completeness} properties: 
\begin{description}
\item[Completeness:] $p_1(x) \vee p_2(x) \vee ... \vee p_n(x) = True$ 
\item[Disjointness:] $p_i(x) \wedge p_j(x) = False, \forall i,j \in [n],$
\end{description}
i.e., for any input $x$ exactly one of the table predicates is $True$. 
Note that a table with a single row $(p_1(x),f_1(x))$ satisfies these properties only when $p_1(x)=True,\ \forall x$. This is important, because
we are going to work with an equivalent representation that has only {\em single-row} tables.

\begin{figure}[t]
\centering
  \subfloat[A table $T$ with input $x$ and output $T(x)$]{%
    {\noindent\begin{minipage}[t]{0.48\textwidth}
      $ x \rightarrow $ 
      \begin{tabular}{|c|c|}  
      \hline
      Condition & Function \\ \hline
      $p_1(x)$ & $f_1(x)$ \\ \hline
      $p_2(x)$ & $f_2(x)$ \\ \hline
      $\ldots$ & $\ldots$ \\ \hline
      $p_n(x)$ & $f_n(x)$ \\ \hline
      \end{tabular} 
      $ \rightarrow T(x)$ \linebreak \ 
      \end{minipage}} \label{table}} \hfill  
  \subfloat[A table graph]{%
    \includegraphics[width=.48\textwidth]{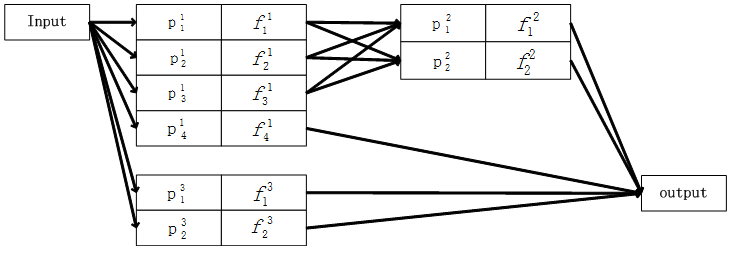} \label{tablegraph}} \\
  \subfloat[The transformed table graph $G$ of the graph in \protect\subref{tablegraph}]{%
    \includegraphics[width=.48\textwidth]{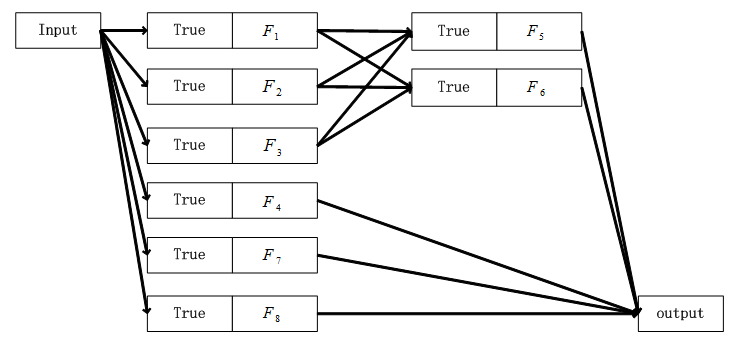} \label{ourtablegraph}}\hfill
  \subfloat[The structure graph $G_{struc}$ of table graph $G$ in \protect\subref{ourtablegraph}]{%
    \includegraphics[width=.48\textwidth]{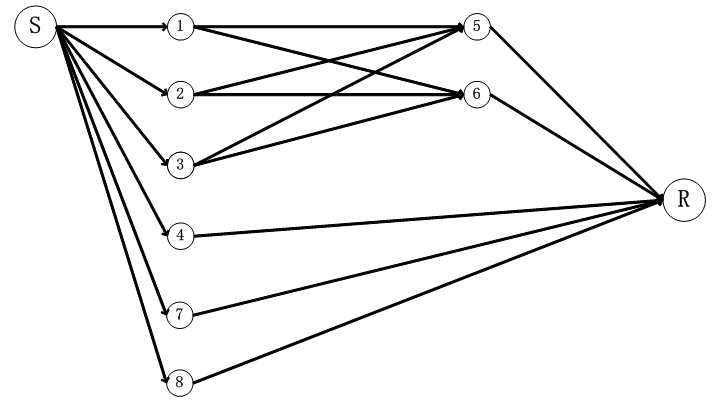}  \label{structuregraph}}
    \caption{Tables, table graphs, and their transformation.}\label{fig:1}
\end{figure}

A table can be used to represent a module of a program (or a design). Then, the whole program can be documented as a directed graph with its different tables being the nodes of the graph, a source node {\em Input} representing the inputs to the program, and a sink node {\em Output} representing the outputs of the program. Every table is connected to the tables or {\em Input} where it gets its input from, and connecting to the tables or {\em Output}, depending on where its output is forwarded for further use. An example
of such a graph can be seen in Figure \subref*{tablegraph}. We will make the following common assumption (achieved, for example, by loop urolling) for table graphs:
\begin{assum}   \label{assumdag}
The table graph of a program is acyclic.
\end{assum}

\subsubsection{Transformation to single-row tables}

Our algorithms will work with single row tables. Therefore, we show how to transform the general tables and graph in Figure \subref*{tablegraph} into an equivalent representation with one-row tables.  The transformation is done in steps:
\begin{itemize}
\item As a first step, a table with $n$ rows is broken
down into $n$ single-row tables. Note that, in general, the row of each multi-row table may receive different inputs
and produce different outputs (with different destinations) that the other table rows; we keep the inputs and outputs
of this row unchanged, when it becomes a separate table.    
\item We enhance each $x_j$ of the external input $(x_1,x_2,...,x_s)$ (introduced to the program at node {\em Inputs}, and transmitted through outgoing edges to the rest of the original table graph) with a special symbol $\top$, to produce 
a new input $(x'_1,...,x'_s)$, with $x'_j=(\top,x_j),\ j=1,\ldots,s$.
\item Let $(P(w),F(w))$ be the {\em (predicate, function)} pair for the new (single-row) table, corresponding to an old row $(p(x),f(x))$ of an original table. As noticed above, $P(w)=True$, no matter what the input $w=(w_1,w_2,\ldots, w_k)$ is. 
Let $x=(x_1,x_2,...,x_k)$, where $w_j=(\top,x_j)$  (i.e., the $x_j$'s are the $w_j$'s without the initial special symbol extension). 
The new function $F(w)$ is defined as follows:
\[ F(w) = \left\{ 
  \begin{array}{ll}
      (\top,f(x)), & \text{if } p(x)=1 \\
      (\bot,\bot), & \text{if } p(x)=0.
  \end{array} \right. \]
Note that $\top$ and $\bot$ are special symbols, not related to the boolean values $True$ and $False$. In particular,
from now on, we will assume that in case the first part of a table output is $\bot$, then this output will be recognized as a bogus
argument, and will not be used as an input by a subsequent table. The reason for adding $\top,\bot$ to $F$ 's output will become 
apparent in Section \ref{sec:encode}.
\end{itemize}

After this transformation, we will get a transformed graph (Figure \subref*{ourtablegraph} shows the transformed graph of
the example in Figure \subref*{tablegraph}). From now on, by {\em table graph} we mean a transformed graph, by
\emph{external input} we mean an input that comes directly from {\em Input}, and by \emph{external output} we mean an output that goes directly to {\em Output}. An \emph{intermediate input} will be an input to a table that is the output of another table, and an \emph{intermediate output} will be an output of a table that is used as an input to another table.

Since Assumption \ref{assumdag} holds, we can inductively order the tables of a table graph $G$ in {\em levels} as follows:
\begin{itemize}
\item An external input from node {\em Input} can be regarded as the output of a virtual level 0 table.
\item A table is {\em at level $k$} ($k>0$) if it has at least one incoming edge from a level $k-1$ table, 
and no incoming edges from tables of level larger than $k-1$.
\end{itemize}
As a consequence, we can traverse a table graph (or subgraph) in an order of increasing levels; we call such a traversal {\em consistent}, because it allows for the correct evaluation of the graph tables (when evaluation is possible), starting from level $0$.
The graph structure of a table graph $G$ can be abstracted by a corresponding {\em structure graph} $G_{struc}$, which
replaces each table in $G$ with a simple node. For example, Figure \subref*{structuregraph} shows the structure graph
for the table graph in Figure \subref*{ourtablegraph}. 

If $TS=\{T_1,T_2,\ldots,T_n\}$ is the set of tables in the representation of a program, we will use the shorthand 
\[ G=[TS,G_{struc}] \]
to denote that its transformed table graph $G$ contains both the table information of $TS$ and the graph structure $G_{struc}$.

\subsection{Cryptographic notation and definitions}

We introduce some basic cryptographic notation and definitions that we are going to use, following the notation in \cite{goldwasser2013reusable}.
We denote by $negl(K)$ a \emph{negligible function of K}, i.e., a function such that
for all $c > 0$, there exists $K_0$ with the property that $negl(K) < K^{-c},\ \forall K>K_0$. 
We use the notation {\em p.p.t.} as an abbreviation of {\em probabilistic polynomial-time} when referring to algorithms. 
Two ensembles ${\{X_{K}\}}_{K \in \mathbb{N}}$ and ${\{Y_{K}\}}_{K \in \mathbb{N}}$, where $X_{K},Y_{K}$ are probability distributions over $\{0,1\}^{l(K)}$ for a polynomial $l(\cdot)$, are \emph{computationally indistinguishable} iff for every p.p.t. algorithm $D$,  
\[ |Pr[D(X_K,1^K)=1]-Pr[D(Y_K,1^K)=1]| \leq negl(K).  \]
We are going to use deterministic private key encryption (symmetric key encryption) schemes, defined as follows (see, e.g., 
\cite{katz2014introduction}):

\begin{defin}[Private key encryption scheme]  \label{symmetrickey}
A private key encryption scheme $SE$ is a tuple of three polynomial-time algorithms $(SE.KeyGen, SE.Enc, SE.Dec)$
with the following properties:
\begin{enumerate}[(1)]
\item The {\em key generation algorithm} $SE.KeyGen$ is a probabilistic algorithm that takes the security parameter $K$ as input, 
and outputs a key $sk \in \{0,1\}^{K}$.
\item The {\em encryption algorithm} $SE.Enc$ is a deterministic algorithm that takes a key $sk$ and a plaintext $M \in
\{0,1\}^{m(K)},\ m(K)>K$ as input, and outputs a ciphertext $C=SE.Enc(sk,M)$.
\item The {\em decryption algorithm} $SE.Dec$ is a deterministic algorithm that takes as input a key $sk$ and a ciphertext $C$, 
and outputs the corresponding plaintext $M=SE.Dec(sk,C)$.
\end{enumerate}
\end{defin}

In order for an encryption to be considered secure, we require {\em message indistinguishability}, i.e., we require that an adversary 
must not be able to distinguish between two ciphertexts, even if it chooses the corresponding plaintexts itself. More formally:

\begin{defin}[Single message indistinguishability]     \label{messageindt}
A private key encryption scheme $SE = (SE.KeyGen, SE.Enc, SE.Dec)$ is \emph{single message indistinguishable} iff for any security parameter K, for any two messages $M,M' \in \{0,1\}^{m(K)},\ m(K)>K$, and for any p.p.t. adversary $A$,
\[ |Pr[A(1^K,SE.Enc(k,M))=1]-Pr[A(1^K,SE.Enc(k,M'))=1]| \leq negl(K), \]
where the probabilities are taken over the (random) key produced by $SE.KeyGen(1^K)$, and the coin tosses of $A$.
\end{defin}
An example of a private key encryption scheme satisfying Definitions \ref{symmetrickey} and \ref{messageindt} is the block cipher 
Data Encryption Standard (DES) in \cite{standard1977fips}.

\subsubsection{Fully Homomorphic Encryption (FHE)}   \label{FHEdef}
 
We present the well-known definition of a powerful cryptographic primitive that we will use heavily in our protocols, namely the \emph{Fully Homomorphic Encryption (FHE)} scheme (see \cite{vaikuntanathan2011computing} and \cite{goldwasser2013reusable}
for the history of these schemes and references).
This definition is built as the end-result of a sequence of definitions, which we give next:

\begin{defin}[Homomorphic Encryption]
A homomorphic (public-key) encryption scheme $HE$ is a tuple of four polynomial time algorithms $(HE.KeyGen, HE.Enc, HE.Dec, HE.Eval)$ with the following properties:
\begin{enumerate}[(1)]
\item $HE.KeyGen(1^K)$ is a probabilistic algorithm that takes as input a security parameter $K$ (in unary), 
and outputs a public key $hpk$ and a secret key $hsk$.
\item $HE.Enc(hpk,x)$ is a probabilistic algorithm that takes as input a public key $hpk$ and an input bit $x\in\{0,1\}$, 
and outputs a ciphertext $\phi$. 
\item $HE.Dec(hsk, \phi)$ is a deterministic algorithm that takes as input a secret key $hsk$ and a ciphertext $\phi$, 
and outputs a message bit.  
\item $HE.Eval(hpk,C,\phi_1,...,\phi_n)$ is a deterministic algorithm that takes as input the public key $hpk$, a circuit $C$ 
with $n$-bit input and a single-bit output, as well as $n$ ciphertexts $\phi_1,...,\phi_n$. It outputs a ciphertext $\phi_C$. 
\end{enumerate}
\end{defin}

We will require that $HE.Eval$ satisfies the {\bf compactness} property: For all security parameters $K$, there exists a polynomial $p(\cdot)$ such that for all input sizes $n$, for all $\phi_1,...,\phi_n$, and for all $C$, the output length of $HE.Eval$ is at most $p(n)$ bits long.

\begin{defin}[C-homomorphic scheme]  \label{homomorphism}
Let $C=\{C_n\}_{n \in \mathbb{N}}$ be a class of boolean circuits, where $C_n$ is a set of boolean circuits taking n bits as input. A scheme $HE$ is {\em C-homomorphic} iff for every polynomial $n(\cdot)$, sufficiently large $K$, circuit $C \in C_n$, 
and input bit sequence $x_1,...,x_n$, where $n=n(K)$, we have
\begin{equation} \label{eqfhe}
Pr\left[HE.Dec(hsk,\phi)   \neq  C(x_1,...,x_n), \text{ where }
  \begin{array}{l}
    (hpk,hsk)  \leftarrow  HE.KeyGen(1^K) \\
    \phi_i  \leftarrow  HE.Enc(hpk,x_i),\ i=1,\ldots,n \\
    \phi  \leftarrow  HE.Eval(hpk,C,\phi_1,\ldots,\phi_n)
  \end{array} 
  \right] \leq negl(K),
\end{equation}
where the probability is over the random bits of $HE.KeyGen$ and $HE.Enc$.
\end{defin}

\begin{defin}
A scheme $HE$ is {\em fully homomorphic} iff it is homomorphic for the class of all arithmetic circuits over $GF(2)$.
\end{defin}

\begin{defin}[IND-CPA security]  \label{defINDCPA}
A scheme HE is {\em IND-CPA secure} iff for any p.p.t. adversary A, 
\begin{multline*}
  |Pr[(hpk,hsk)\leftarrow HE.KeyGen(1^K):A(hpk,HE.Enc(hpk,0))=1]-\\ 
  Pr[(hpk,hsk)\leftarrow HE.HeyGen(1^K):A(hpk,HE.Enc(hpk,1))=1]| \leq negl(K).
\end{multline*}
\end{defin}

Fully homomorphic encryption has been a concept known for a long time, but it wasn't until recently that Gentry~\cite{gentry2009fully}  gave a feasible implementation of FHE. Our work is independent of particular FHE implementations; we only require their existence. 
For simplicity, sometimes we write $FHE.Enc(x)$ when the public key $hpk$ isn't needed explicitly. For an $m$-bit string $x=x_1...x_m$, we write $FHE.Enc(x)$ instead of the concatenation $FHE.Enc(x_1)\odot\ldots\odot FHE.Enc(x_m)$, and we do the same for $FHE.Dec$ as well. Similarly, for $FHE.Eval$ with a circuit $C$ as its input such that $C$ outputs $m$ bits $C_1,C_2,\ldots,C_m$, sometimes we write $FHE.Eval(hpk,C,FHE.Enc(hpk,x))$ to denote the concatenation
$FHE.Eval(hpk,C_1,FHE.Enc(hpk,x))\odot\ldots\odot FHE.Eval(hpk,C_m,FHE.Enc(hpk,x))$.

We usually use $\lambda=\lambda(K)$ to denote the ciphertext length of a one-bit FHE encryption.
 
Next, we present the {\em multi-hop homomorphism} definition of \cite{gentry2010hop}. Multi-hop homomorphism is an important property for our algorithms, because it allows using the output of one homomorphic evaluation as an input to another homomorphic evaluation. 

An ordered sequence of functions $\vec{f}=\{f_1,\ldots,f_t\}$ is \emph{compatible} if the output length of $f_j$ is the same as the input length of $f_{j+1}$, for all $j$. The composition of these functions is denoted by $(f_t\circ...\circ f_1)(x)=f_t(\ldots f_2(f_1(\cdot))...)$.
Given a procedure $Eval(\cdot)$, we can define an {\em extended procedure} $Eval^*$ as follows: $Eval^*$ takes as input the public key $pk$, a compatible sequence $\vec{f}=\{f_1,\ldots ,f_t\}$, and a ciphertext $c_0$. For $i=1,2,\ldots,t$, it sets $c_i \leftarrow Eval(pk,f_i,c_{i-1})$, outputting the last ciphertext $c_t$.

\begin{defin}[Multi-hop homomorphic encryption scheme]     \label{defmultihop}
Let $i=i(K)$ be a function of the security parameter $K$. A scheme $HE=(HE.KeyGen,HE.Enc,HE.Dec,HE.Eval)$ is an {\em $i$-hop homomorphic encryption scheme} iff for every compatible sequence $\vec{f}=\{f_1,\ldots,f_t\}$ with $t\leq i$ functions, every input $x$ to $f_1$, every (public,secret) key pair $(hpk,hsk)$ in the support of $HE.KeyGen$, and every $c$ in the support of $HE.Enc(hpk,x)$,
\[ HE.Dec(hsk,Eval^*(hpk,\vec{f},c))=(f_t\circ\ldots\circ f_1)(x). \]
$HE$ is a {\em multi-hop homomorphic encryption scheme} iff it is $i$-hop for any polynomial $i(\cdot)$.
\end{defin}

Not all homomorphic encryption schemes satisfy this property, but \cite{gentry2010hop}, \cite{vaikuntanathan2011computing} show that it holds for fully homomorphic encryption schemes. In our algorithms we will use FHE schemes, that also satisfy the IND-CPA security property of Definition \ref{defINDCPA}.

\subsubsection{Bit commitment protocols}   \label{bitcommit1}

Following Naor \cite{naor1991bit}, a Commitment to Many Bits (CMB) protocol is defined as follows: 

\begin{defin}[Commitment to Many Bits (CMB) Protocol] \label{bitcommit2}
A CMB protocol consists of two stages:
\begin{description}
\item[The commit stage:] Alice has a sequence of bits $D=b_1b_2...b_m$ to which she wishes to commit to Bob. She and Bob enter a message exchange, and at the end of the stage Bob has some information $Enc_D$ about $D$.
\item[The revealing stage:] Bob knows $D$ at the end of this stage.
\end{description}
The protocol must satisfy the following property for any p.p.t. Bob, for all polynomials $p(\cdot)$, and for a large enough security parameter $K$:
\begin{itemize}
\item For any two sequences $D=b_1,b_2,...,b_m$ and $D'=b'_1,b'_2,...,b'_m$ selected by Bob, following the commit stage Bob cannot guess whether Alice committed to $D$ or $D'$ with probability greater than $1/2+1/p(K)$.
\item Alice can reveal only one possible sequence of bits. If she tries to reveal a different sequence of bits, then she is caught with probability at least $1-1/p(K)$.
\end{itemize}
\end{defin}
We are going to use the construction of a CMB protocol by Naor \cite{naor1991bit}.

\subsection{Evaluation and verification}  \label{evaluate}

\noindent{\bf Evaluation: }Tabular expressions (tables) can be used to represent a program at different levels of abstraction. In this work, we concentrate 
on two levels: the {\em specifications} level, and the {\em implementation} level. Our goal is to design protocols that will
allow a third party (usually called the {\em verifier} in our work) to securely and truthfully verify the compliance of a design
at the implementation level (represented by a table graph $G$) with the publicly known specifications (represented by
a table graph $G^{spec}$), without publicly disclosing more than the structure graph $G_{struc}$ of $G$.
   
We are going to abuse notation, and use $G^{spec}$ and $G$ also as functions $G^{spec}: D^{spec} \rightarrow R^{spec}$, and
$G:D \rightarrow R$. Wlog we assume that these two functions are over the same domain and have the same range, i.e., $D^{spec}=D$ and $R^{spec}=R$.\footnote{In general, the two pairs ($D, R$) and ($D^{spec}, R^{spec}$) may not be the same. Nevertheless, in this case there must be a predetermined mapping provided by the implementation designer, which converts ($D_{spec}, R_{spec}$) to ($D, R$) and vice versa, since, otherwise, the specifications verification is meaningless as a process.}

The evaluation of $G^{spec}, G$ (or any table graph), can be broken down into the evaluation of the individual tables
level-by-level. The {\em evaluation} of a table $T_i$ at input $x^i=(x^i_1,x^i_2,\ldots,x^i_k)$ produces an output $y^i=T_i(x^i)$. Note that
some inputs $x^i_j$ may be {\em null}, if they come from non-evaluated tables, and if $T_i$ is not evaluated at all, 
then $T_i(x^i)=null$. In general, if some $x^i_j$ is $(\bot,\bot)$ or \emph{null}, then we set $T_i(x^i)=null$.

Given an external input $X$ to the table graph $G$, the \emph{evaluation} of $G(X)$, is done as follows:
Let $TSO=(T_{s_1},...,T_{s_n})$ be an ordering of the tables in $G$ in increasing level numbers (note that this is also a topological
ordering, and that the first table is the virtual table of level $0$). Let $T_{i_1},...,T_{i_s}$ be the tables that have external outputs.
Then the evaluation of $G$ is the evaluation of its tables in the $TSO$ order, and $G(X)$ is the set of the external outputs of $T_{i_1},...,T_{i_s}$. We call this process of evaluating $G(X)$ {\em consistent}, because we will never try to evaluate
a table with some input from a table that has not been already evaluated. All graph evaluations will be assumed to be consistent
(otherwise they cannot be deterministically defined and then verified).
 
\noindent{\bf Verification: }The verification processes we deal with here work in two phases: 
\begin{itemize}
\item During the first phase, the table graph $G$ is analyzed, and a set of (external or internal) inputs is generated, 
using a publicly known verification test-case generation algorithm $VGA$. 
\item In the second phase, $G$ and $G^{spec}$ are evaluated on the set of inputs generated during the first phase. If the evaluations coincide, we say that $G$ {\em passes the verification}.
\end{itemize}

In our setting, $VGA$ is any algorithm that takes the publicly known $G_{struc},G^{spec}$ as input, and outputs information that guides the verification (such as a set of external inputs to the table graph $G$ or certain paths of $G$). We emphasize that the output of $VGA$ is not necessarily external inputs. Our protocols can also work with any $VGA$ which generates enough information for the verifier to produce a corresponding set of external inputs, possibly by interacting with the designer of implementation $G$. We also allow for verification
on a (possibly empty) predetermined and publicly known set $CP$ of $(input,output)$ pairs, i.e., for every $(X,Y)\in CP$ with $X$ being a subset of the external inputs and $Y$ a subset of external outputs, $G$ passes the verification iff $G(X)$ can be evaluated and $G(X)=Y$. The pairs in $CP$ are called {\em critical points}.   

Our protocols will use a {\em security parameter} $K$, represented in unary as $K$ 1's, i.e., as $1^K$. Now, we are 
ready to define a {\em trusted verifier} (or {\em trusted verification algorithm}) (we use the two terms interchangeably): 
\begin{defin}[Trusted verifier] \label{verfier}
A {\em trusted verifier} $V$ is a p.p.t. algorithm that uses $r$ random bits, and such that
\[ Pr_r\left[V(1^K,G,G^{spec},CP,VGA)(r) =  \left\{
\begin{array}{ll}  
0, & \text{if } \exists X \in EI \text{: } G(X) \neq G^{spec}(X) \\
0, & \text{if } (CP\not=\emptyset) \wedge (\exists (X,Y) \in CP \text{: }G(X)\neq Y) \\ 
1, & \text{if the previous two conditions aren't true} 
\end{array}\right. \right] \geq 1-negl(K) \]
where $EI\subseteq D$ is a (possibly empty) set of external inputs generated by $V$ itself. 
\end{defin}
A verifier that satisfies Definition \ref{verfier} is called trusted, because it behaves in the way a verifier is supposed to behave whp:
Essentially, a trusted verifier uses the publicly known $K,G^{spec},CP,VGA$ and a publicly known $G$,
to produce a set of external inputs $EI$; it accepts iff the verification doesn't fail at a point of $EI$ or $CP$. During its running,
the verifier can interact with the designer of $G$. While $G$ is publicly known, the operations of the verifier are straight-forward,
even without any interaction with the designer. The problem in our case is that the verifier has only an {\em encryption}
of $G$, and yet, it needs to evaluate $G(X)$ in order to perform its test. This paper shows how to succeed in doing that,
interacting with the designer of $G$, and without leaking more information about $G$ than the publicly known $G_{struc}$.  

In what follows, a verifier will be a part of a {\em verification scheme}, generally defined as follows:
\begin{defin}[Verification scheme]  \label{defe}
A {\em verification scheme} $VS$ is a tuple of p.p.t. algorithms $(VS.Encrypt, VS.Encode, VS.Eval)$ such that
\begin{itemize}
\item $VS.Encrypt(1^K, G)$ is a p.p.t. algorithm that takes a security parameter $1^K$ and a table graph $G$ as input, 
and outputs an encrypted table graph $G'$. 
\item $VS.Encode$ is a p.p.t. algorithm that takes an input $x$ and returns an encoding $Enc_{x}$.	
\item $VS.Eval$ is a p.p.t. algorithm that takes a security parameter $K$ and a public $Certificate$ as input, has an honest verifier $V$ satisfying Definition \ref{verfier} hardcoded in it, and outputs $1$ if the verification has been done correctly (and $0$ otherwise).   
\end{itemize}
\end{defin}

From now on, it will be assumed (by the verifier and the general public) that the encryption by the designer is done
(or {\em looks like} it has been done) by using a publicly known algorithm $VS.Encrypt$ (with a secret key, of course). If the 
designer's encryptions don't comply with the format of $VS.Encrypt$'s output, the verification fails automatically.

In general, it may be the case that an algorithm claiming to be a trusted verifier within a verification scheme, doesn't satisfy Definition \ref{verfier} (either 
maliciously or unintentionally). Such a malicious verifier may claim that a design passes or doesn't pass the verification process,
when the opposite is true. In order to guard against such a behaviour, we will require that there is a piece of public information, that
will act as a {\em certificate}, and will allow the {\em exact} replication of the verification process by any other verifier and at any
(possibly later) time; if this other verifier is a known {\em trusted} verifier, it will detect an incorrect verification with high probability (whp). 
We emphasize that the interaction (i.e., the messages exchange) shown in Figure \ref{protocol1} is done {\em publicly}.
In fact, we will use the transaction record as a {\em certificate} (cf. Definition \ref{defe}), that can be used to replicate 
and check the verification, as described above.

Hence, we will require that our verification schemes are:
\begin{itemize}
\item {\em secure}, i.e., they don't leak the designer's private information, and
\item {\em correct}, i.e., they produce the correct verification result whp.
\end{itemize}
More formally, we define:

\begin{defin}[Correctness]   \label{correctness3}
A verification scheme is {\em correct} iff the following holds:  
$VS.Eval(1^K,Certificate)=1$ if and only if both of the following hold:
 \begin{eqnarray}  
 Pr_r[V(1^K,G,G^{spec},VGA_r,CP)(r)=V(1^K,G',G^{spec},VGA_r,CP)(r)] \geq 1-negl(K) & \label{eqcorrectness1} \\
 Pr_r[V(1^K,G',G^{spec},VGA_r,CP)(r)=V'(1^K,G',G^{spec},VGA_r,CP)(r)] \geq 1-negl(K) & \label{eqcorrectness2}
 \end{eqnarray}
where $V$ is the honest verifier hardwired in $VS.Eval$, and $G'$ is the table graph produced by $VS.Encrypt$.
\end{defin}
Condition \eqref{eqcorrectness2} forces $V$ to be trusted, and condition \eqref{eqcorrectness1} ensures that its verification
is correct whp, even when applied to $G'$ instead of $G$; together ensure the correctness of the verification whp.

Just as it is done in~\cite{goldwasser2013reusable}, we give a standard definition of security:
\begin{defin}[Security]   \label{security3}
For any two pairs of p.p.t. algorithms $A=(A_1,A_2)$ and $S=(S_1,S_2)$, consider the two experiments in Figure \ref{exprl}.
\begin{figure}[h!]
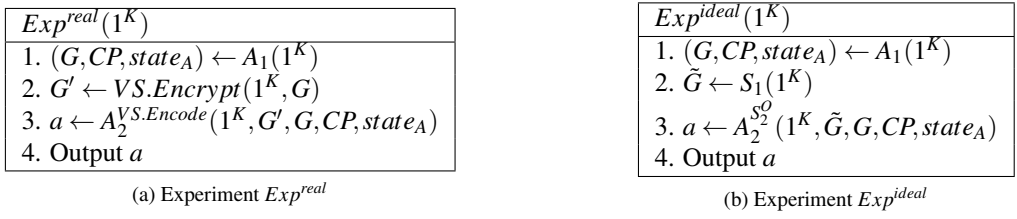

\centering
  \subfloat[Experiment $Exp^{real}$]{%
    {\noindent\begin{minipage}[t]{0.48\textwidth}
    \centering
    \begin{tabular}{|l|}
    \hline
    $Exp^{real}(1^K)$ \\ \hline
    1. $(G,CP,state_A) \leftarrow A_1(1^K)$ \\
    2. $G' \leftarrow VS.Encrypt(1^K,G)$  \\
    3. $a \leftarrow A_2^{VS.Encode}(1^K,G',G,CP,state_A)$  \\
    4. Output $a$  \\ \hline
    \end{tabular}
    \end{minipage}}} \hfill
  \subfloat[Experiment $Exp^{ideal}$]{%
    {\noindent\begin{minipage}[t]{0.48\textwidth}
    \centering
    \begin{tabular}{|l|}
    \hline
    $Exp^{ideal}(1^K)$ \\ \hline
    1. $(G,CP,state_A) \leftarrow A_1(1^K)$  \\
    2. $\tilde{G} \leftarrow S_1(1^K)$  \\
    3. $a \leftarrow A_2^{S^{O}_2}(1^K,\tilde{G},G,CP,state_A)$  \\
    4. Output $a$  \\ \hline
    \end{tabular}
    \end{minipage}}}   
    \caption{Experiments $Exp^{real}$ and $Exp^{ideal}$}  \label{exprl}
\end{figure}

A verification scheme $VS$ is \emph{secure} iff there exist a pair of p.p.t. algorithms (simulators) $S=(S_1,S_2)$, 
and an oracle $O$ such that for all pairs of p.p.t. adversaries $A=(A_1,A_2)$, the following is true for any p.p.t. algorithm $D$:
\[ \left|Pr[D({Exp^{ideal}(1^K)},1^K)=1]-Pr[D({Exp^{real}(1^K)},1^K)=1]\right| \leq negl(K),\]
i.e., the two experiments are computationally indistinguishable.
\end{defin}

\begin{defin}[Secure and trusted verification scheme]   \label{defenc} 
A verification scheme of Definition \ref{defe} is {\em secure and trusted} iff it satisfies Definitions \ref{correctness3} and \ref{security3}.
\end{defin}
 
\begin{figure}[h!]
\begin{center}
\includegraphics[width=\textwidth]{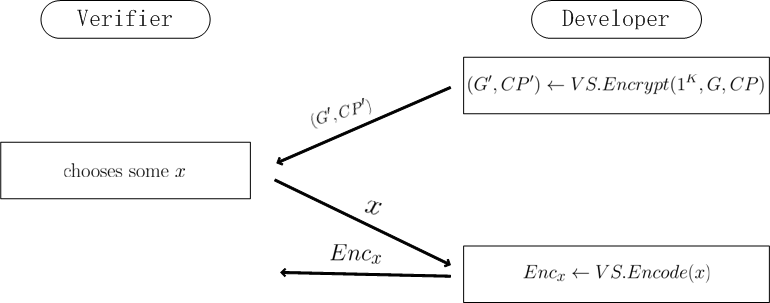}
\end{center}
\caption{The initial steps of the generic $VS$ protocol in Definition \ref{defe}.}
\label{protocol1}
\end{figure}

\begin{rem}  \label{remnotoracle}
$VS.Encode$ and $S^{O}_2$ in Step 3 of the two experiments in Figure \ref{exprl} are {\em not} oracles 
used by $A_2$. In these experiments, $A_2$ plays the role of a (potentially malicious) verifier and interacts with the developer as shown in Figure \ref{protocol1}. More specifically, in $Exp^{real}$, $A_2$ picks an input $x$, {\em asks the developer to run $VS.Encode(x)$} (instead of querying an oracle), and receives the answer. In $Exp^{ideal}$, $A_2$ again asks the developer, but, unlike $Exp^{real}$, the latter runs $S^{O}_2$ instead of $VS.Encode$ and provides $A_2$ with the answer. Hence, whenever we say that $A_2$ queries $VS.Encode$ or $S^{O}_2$, we mean that $A_2$ asks the developer to run $VS.Encode$ or $S^{O}_2$ respectively, and provide the answer. Note that $O$ {\em is} an oracle queried by $S_2$.
\end{rem}

In what follows, first we deal (in Section \ref{sec:honest}) with an {\em honest} designer/developer, i.e., a developer that answers honestly the verifier's
queries:  
\begin{defin}   \label{honest}
An \emph{honest} developer is a developer that calls $VS.Encode(x)$ to generate $Enc_x$ in Figure \ref{protocol1},
when queried by the verifier.
\end{defin} 
In the case of using the verifier to test a design, the developer obviously has no reason to not being honest. But in the case
of verifying the correctness of the design, the developer may not follow Definition \ref{honest}, in order to pass the verification
process (cf. Section \ref{sec:maldev}).


\section{Secure and trusted verification for honest developers}  \label{sec:honest}

\subsection{Construction outline}  \label{sec:construct}

In this section, we construct a secure verification scheme satisfying Definition \ref{defenc}. In a nutshell, we are looking for
a scheme that, given an encrypted table graph and its structure graph, will verify the {\em correctness} of evaluation
on a set of test external inputs, in a {\em secure} way; to ensure the latter, we will require that the intermediate table
inputs/outputs produced during the verification process are also encrypted.

\subsubsection{VS.Encrypt}

In order to encrypt the rhs functions of the tables as well as the intermediate inputs/outputs, we use {\em universal circuits} 
\cite{valiant1976universal}. If we represent each function $F_i$ in (transformed) table $T_i=(True,F_i)$ as a boolean circuit
$C_i$ (see Appendix \ref{sec:circuits} for methods to do this), then we construct a universal circuit $U$ and a string $S_{C_i}$ for each $C_i$, so that for any input $x$, $U(S_C,x)=C(x)$.
Following \cite{sander1999non} (a method also used in \cite{goldwasser2013reusable}), $S_{C_i}$ and $x$ can be encrypted,
while the computation can still be performed and output an encryption of $C(x)$.  
 
Therefore, $VS.Encrypt$ fully homomorphically encrypts $S_{C_i}$ with FHE's public key $hpk$ to get $E_{C_i}$, and replaces each table $T_i$ with its encrypted version $T'_i=(True,E_{C_i})$. Then, if a verifier wants to evaluate $T'_i$ at $x$,
it gets the FHE encryption $x'$ of $x$, and runs $FHE.Eval(hpk,U,E_{C_i},x')$ to get $T'_i(x')=FHE.Eval(hpk,U,E_{C_i},x')$. Because $FHE.Eval(hpk,U,E_{C_i},x')=FHE.Enc(hpk,C(x))$ holds, we have that $T'_i(x')=FHE.Enc(hpk,C(x))$. Note that
the encrypted graph $G'$ that $VS.Encrypt$ outputs maintains the same structure graph $G_{struc}$ as the original table 
graph $G$, and that $VS.Encrypt$ outputs $hpk,U$ in addition to $G'$. Algorithm \ref{encrypt} implements $VS.Encrypt$.

\begin{algorithm}[h!]
\caption{$VS.Encrypt(1^K,G)$} 
\begin{algorithmic}[1]
\State{$(hpk,hsk) \leftarrow FHE.KeyGen(1^K)$ }
\State {Construct a universal circuit $U$ such that, for any circuit $C$ of size $s$ and depth $d$, a corresponding string $S_C$  can be efficiently (in terms of $s$ and $d$) computed from $C$, so that $U(S_C,x)=C(x)$ for any input $x$.}
\State Suppose $C$ outputs $m$ bits. Construct $m$ circuit $U_1,...,U_m$ such that for input $x$ and any $i\in [m]$, $U(x,S_C)$ outputs the $i$th bit of $U(x,S_C)$. \label{encryptm}
\ForAll {$T_i \in TS, i \in \{1,...,n\}$}   \label{lineencrypt1}
\State Let $C_i$ be the circuit that $C_i(x)=F_i(x)$
\State Construct the string $S_{C_i}$ from $C_i$
\State $E_{C_i} \leftarrow FHE.Enc(hpk,S_{C_i})$
\State $T'_i \leftarrow (True,E_{C_i})$
\EndFor   \label{lineencrypt2}
\State $TS' \leftarrow \{T'_1,...,T'_n\}$
\State $G'_{struc} \leftarrow G_{struc}$
\State \Return {$G'=[TS', G'_{struc}], hpk, U=(U_1,...,U_m)$} 
\end{algorithmic}    \label{encrypt}
\end{algorithm}

\subsubsection{VS.Encode}   \label{sec:encode}

$VS.Encode$ is going to address two cases:
\begin{enumerate}
\item Suppose the verifier is evaluating an encrypted table $T'_i$ whose output is an external output. From the construction of $T'_i$, we know that its output is an FHE-encrypted ciphertext. But the verifier needs the {\em plaintext} of this ciphertext in order for verification to work. We certainly cannot allow the verifier itself to decrypt this ciphertext, because then the verifier (knowing the secret key of the encryption) would be able to decrypt the encrypted circuit inside $T'_i$ as well. What we can do is to allow the verifier to ask the developer for the plaintext; then the latter calls $VS.Encode$ to decrypt this ciphertext for the verifier.
\item Suppose the verifier is evaluating an encrypted table $T'_i$ whose output is an internal output used as an input to another table. In this case, we cannot allow the verifier to simply ask the developer to decrypt this output as before; that would give away the intermediate outputs, which are supposed to be kept secret. At the same time, the verifier must be able to figure out whether the actual (unencrypted) value of this intermediate output is $(\bot,\bot)$ or not, i.e., ``meaningful" or not. More specifically, $V$ should be able to tell whether intermediate output $FHE.Enc(\top,b)$ contains $\top$ or $\bot$. (Recall from Section \ref{sec:prelim} that an original table $T_i$ in $G$ outputs a symbol $\bot$ if the predicate in the initial table graph is not satisfied, i.e., $T_i$'s output is not ``meaningful".) 

In Case 2, particular care must be taken when the verifier chooses an external input $x=(\top,a)$ for table $T'_i$. The verifier is required to use the FHE encryption $x'$ of $x$ in order to evaluate $T'_i$. An obvious way  to do this is to let the verifier compute $FHE.Enc(hpk,(\top,a))$, and evaluate 
\[ T'_i(FHE.Enc(hpk,(\top,a))) \leftarrow FHE.Eval(hpk,U,E_{C_i},FHE.Enc(hpk,(\top,a))). \]
However, this simple solution can be attacked by a malicious verifier as follows: The verifier chooses one of the intermediate outputs, and claims that this intermediate output is the encryption of the external input. Then Case 1 applies, and the verifier asks the developer for the output of $VS.Encode$. Then it is obvious that through this interaction with the developer, the verifier can extract some partial information about the intermediate output. We use $VS.Encode$ to prevent this malicious behaviour: For any external input chosen by the verifier, the latter cannot fully homomorphically encrypt the input by itself; instead, it must send the input to the developer, who, in turn, generates the FHE encryption by calling $VS.Encode$. 
\end{enumerate}
In order to allow $VS.Encode$ to distinguish between the two cases, we introduce an extra input parameter that takes a value from the set
of special 'symbols' $\{q_1,q_2\}$ meaning the following:
\begin{description}
\item[$q_1$:] $VS.Encode$ is called with $(i,x^i,null,q_1)$. Index $i$ indicates the $i$th table $T'_i \in G'$ and $x^i$ is an external input to $T'_i$ (Case 1). 
\item[$q_2$:] $VS.Encode$ is called with $(i,x^i,T'_i(x^i),q_2)$. Index $i$ indicates the $i$th table $T'_i \in G'$ and $x^i$ is an intermediate input to $T'_i$ (Case 2).
\end{description}
Also, we allow $VS.Encode$ to store data in a memory $M$, which is wiped clean only at the beginning of the protocol. Algorithm
\ref{encode} implements $VS.Encode$.

\begin{algorithm}[h!]
\caption{$VS.Encode(i,u,v,q)$}
\begin{algorithmic}[1]
\If { $q=q_1$}
\Comment Case 1
\If {$|u| \neq m$ or first component of $u$ is not $\top$} 
\Comment $m$ defined in line \ref{encryptm} of Algorithm \ref{encrypt}
\State \Return $null$
\Else
\State $w \leftarrow FHE.Enc(hpk,u)$
\label{basecase}
\State store $(i,w,null)$ in $M$
\EndIf
\State \Return $w$
\EndIf   \label{lineqkindencode}
\If {$q=q_2$}
\Comment Case 2
\ForAll  {$x^i_j \in u$}     
\Comment $u=(x^i_1,x^i_2,\ldots)$ 
\If {$x^i_j$ is an output of some $T'_k$, according to $G'$}
  \If{$\nexists (k,x^k, T'_k(x^k)) \in M$ such that $x^i_j = T'_k(x^k)$ }    \label{lineerror1}
  \Return $null$
  \ElsIf {$\exists (k,x^k,T'_k(x^k)) \in M$ such that $x^i_j = T'_k(x^k)$ and $T'_k(x^k)$ is an FHE encryption of $(\bot,\bot)$}   \label{lineerror2}
  \Return $null$
\EndIf
\ElsIf{$x^i_j$ is an external input, according to $G'$}
\If {$\nexists (i,w,null) \in M$ such that $x^i_j = w$ }    
\label{lineerror3}
\Return $null$
\EndIf
\EndIf
\EndFor
\ForAll {$k \in \{1,...,m\}$}
\State $s_k=FHE.Eval(hpk,U_k,u,E_{C_i})$    
\Comment recall that $U=(U_1,U_2,\ldots,U_m)$ 
\EndFor
\If {$[s_1s_2...s_m] \neq v$}
\label{lineerror4}
\Return $null$
\Else
\State store $(i,u,v)$ in $M$   \label{lineencode1}
\ForAll {$i \in \{1,...,m\}$}
\State $b_i \leftarrow FHE.Dec(hsk,s_i)$ \label{line30}
\EndFor
\If {$v$ is not an external output and $b_1b_2...b_{m/2} \neq \bot$}
\Return $\top$
\Else
\ \Return $b_{m/2+1}b_2...b_m$
\EndIf\label{lineencode2}
\EndIf
\EndIf
\end{algorithmic}   \label{encode}
\end{algorithm}

\subsubsection{VS.Eval}   \label{sec:vseval}

$VS.Eval$ is an algorithm that allows anyone to check whether the verification was done correctly. In order to achieve this, the interaction between the verifier and the developer is recorded in a public file $QA_E$. By reading $QA_E$, $VS.Eval$ can infer which inputs and tables the verifier evaluates, and what outputs it gets. This allows $VS.Eval$ to evaluate the tables on the same inputs, {\em using its own verifier}, and {\em at any time}, and check whether the outputs are the recorded ones. Obviously, by using only an honest verifier and $QA_E$, $VS.Eval$ essentially checks whether the verifier that interacted with the developer originally was also honest. 

More specifically, $VS.Eval$ takes input $(1^K, QA_E)$, and outputs $1$ or $0$ (i.e., accept or reject the verification, respectively). The record file $QA_E=\{(Q_1,A_1),...,(Q_n,A_n), VGA, s, CP, G', hpk, U\}$ records a sequence of {\em (verifier query, developer reply)} pairs $(Q_i,A_i)$, where $Q_i$ is the verifier query, and $A_i$ is the reply generated by the developer when it runs $VS.Encode(Q_i)$ (the last pair obviously records the verifier's output). It also records test-generator $VGA$, the set of critical points $CP$ used, and the random seed $s$ used by $VGA$ to produce the test points. Finally, it records the encrypted table
graph $G'$, the public FHE key $hpk$, and the universal circuit $U$. As mentioned above, the pairs $(Q_i,A_i)$ are public
knowledge. Algorithm \ref{eval} implements $VS.Eval$. $V'$ is the hardwired honest verifier.

\begin{algorithm}[h!]
\caption{$VS.Eval(1^K,QA_E$)}
\begin{algorithmic}[1]
\ForAll{ $(Q_i,A_i) \in QA_E$}
\State Suppose $Q_i$ is $(r,u,v,q_2)$
\State $T'_r(u) \leftarrow FHE.Eval(hpk,U,E_{C_r},u)$ 
\Comment recall that $T'_r=(True,E_{C_r})$ is in $G'$  \label{line3vseval}
\If {$T'_r(u) \neq v$}
\State \Return 0
\EndIf
\EndFor
\State Run honest verifier $V'(1^K,G',G^{spec},VGA(s),CP)$  \label{lineVGA}
\ForAll {$T'_i(x^i)$ that $V'$ chooses to evaluate}
\If{$\nexists (Q_j,A_j) \in QA_E$ with $Q_j=(i,x^i,T'_i(x^i),q_2)$}
\State \Return 0
\EndIf
\EndFor
\If {$V'$'s output $\not=$ $V$'s output}
\State \Return 0
\EndIf
\State \Return 1 \label{lineQA}
\end{algorithmic}   \label{eval}
\end{algorithm}

 \subsubsection{VS.Path}

We can provide the verifier with the ability to choose and evaluate paths in $G_{struc}$. The verifier picks a path in the table graph, passes it to the developer, and the latter runs $VS.Path$ in order to generate an external input that, when used, will lead to the evaluation of the tables on the path chosen by the verifier. Algorithm \ref{eval} implements $VS.Path$.

\begin{algorithm}[h!]
\caption{$VS.Path(T_1,...,T_p)$}    \label{Path}
\begin{algorithmic}[1]
\If {$T_1,\ldots,T_p$ form a path $P=T_1 \rightarrow T_2 \rightarrow \ldots \rightarrow T_p$ in $G_{struc}$}
\State Generate external input $X$ to the table graph $G$, so that the evaluation of $G(X)$ includes the evaluation of tables $T_1,\ldots,T_p$.
\State \Return $X$
\Else
\ \Return $null$
\EndIf
\end{algorithmic}   \label{vspath}
\end{algorithm}

\subsubsection{Encrypted table graph evaluation}

The evaluation of an encrypted table graph $G'$ is more complex than the evaluation of an unencrypted table graph $G$. The evaluation of $G'$ at an external input $X$ (i.e., $G'(X)$) is done by Algorithm \ref{evalalg}. 

\begin{algorithm}[h!]
\caption{$G'(X)$}
\begin{algorithmic}[1]
\State $V$ uses $X$ to set up external inputs for the tables
\ForAll {table $T'_i$ chosen in a consistent order} 
\ForAll { $x^{i}_j \in x^{i}$}
\Comment $x^i=(x^i_1,x^i_2,\ldots)$
\If {$x^{i}_j$ is an external input}
\If {$x^i_j\not= null$}
\State $V$ asks the developer to call $VS.Encode(i,x^{i}_j,null,q_1)$
\State $x^{i}_j \leftarrow VS.Encode(i,x^{i}_j,null,q_1)$ 
\Comment $x^i$ is being replaced with its encoding 
\Else
\State $T'_{i}(x^i)=null$
\label{geval1}
\State \textbf{break}
\EndIf
\Else
\State $T'_p=$ the table whose output is $x^i_j$ 
\If {$T'_p$'s output is $null$}
\State $T'_{i}(x^i) = null$    \label{geval2}
\State \textbf{break}
\ElsIf {$VS.Encode(p,x^p,T'_p(x^p))=\bot$ }
\State $T'_{i}(x^i) = null$
\State \textbf{break}
\Else \ $x^{i}_j \leftarrow T'_p(x^p)$
\Comment $T'_p(x^p)$ is already encoded
\EndIf
\EndIf
\EndFor
\If {$T'_{i}(x^i)=null$} \textbf{continue}
\Else 
\State $T'_i(x^i) \leftarrow FHE.Eval(hpk,U,E_{C_i},x^i)$
\Comment $V$ evaluates $T'_{i}(x^{i})$
\label{evalpt}
\State $V$ asks the developer to call $VS.Encode(i,x^{i},T'_{i}(x^{i}),q_2)$   \label{encodept}
\State $V$ receives $VS.Encode(i,x^{i},T'_{i}(x^{i}),q_2)$
\EndIf
\EndFor
\State $Y=\{y_{i_1},...,y_{i_s}\}$ are the external output values
\State \Return $Y$
\end{algorithmic}   \label{evalalg}
\end{algorithm}

\subsection{Correctness and security}     \label{sectionproof3}

We have described an implementation of $VS.Encrypt$, $VS.Encode$, $VS.Eval$ of Definition \ref{defe}. 
Note that $QA_E$ plays the role of $Certificate$ for $VS.Eval$. In this section we prove the compliance of our scheme
with Definitions \ref{correctness3} and \ref{security3}. Recall that $FHE$ is the fully homomorphic encryption scheme introduced in Section \ref{FHEdef}.

\subsubsection{Correctness}

\begin{theorem}    \label{thm3}
The verification scheme VS introduced in Section \ref{sec:construct} satisfies Definition \ref{correctness3}.
\end{theorem}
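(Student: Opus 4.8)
The plan is to verify Definition~\ref{correctness3} by establishing each of its two equivalent conditions separately, and then observing that $VS.Eval(1^K,QA_E)=1$ holds iff both are satisfied. The key insight is that $VS.Encrypt$ replaces each table $T_i=(True,F_i)$ with $T'_i=(True,E_{C_i})$ where $E_{C_i}=FHE.Enc(hpk,S_{C_i})$, and that the evaluation of $T'_i$ on the FHE-encrypted input $x'$ yields, by the correctness (compactness and $C$-homomorphism, Definition~\ref{homomorphism}) of the FHE scheme together with the universal-circuit property $U(S_C,x)=C(x)$, the ciphertext $FHE.Eval(hpk,U,E_{C_i},x')=FHE.Enc(hpk,C_i(x))=FHE.Enc(hpk,F_i(x))$, except with probability $negl(K)$. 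Since the table graph has polynomially many tables and a consistent evaluation visits each at most once, a union bound over all tables (and over the polynomially many test inputs produced by $VGA$ and the points of $CP$) keeps the total error negligible.

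First I would prove condition~\eqref{eqcorrectness1}, i.e., that $V$ run on $G'$ agrees whp with $V$ run on $G$ for the same random seed. The argument is an induction on the levels of the table graph (which exist by Assumption~\ref{assumdag} and the level-ordering construction): at level $0$ the external inputs fed into $G'$ are precisely $VS.Encode(i,x^i,null,q_1)=FHE.Enc(hpk,x^i)$, which decrypt to the same $x^i$ used in $G$; assuming inductively that every table up to level $k-1$ that $G$ evaluates to a non-$null$ value $y$ is evaluated in $G'$ to $FHE.Enc(hpk,y)$ (and tables $G$ leaves $null$, or outputs $(\bot,\bot)$, are correctly detected as $null$ by the $\bot$-check inside $VS.Encode$ in Case~2 using the $b_1\cdots b_{m/2}$ bits from line~\ref{line30}), the multi-hop homomorphic property (Definition~\ref{defmultihop}, which FHE satisfies) guarantees that feeding these ciphertexts into the level-$k$ tables again produces the FHE-encryptions of the correct plaintext outputs, up to $negl(K)$ per table. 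In particular the external outputs of $G'$ decrypt to those of $G$, so the comparison against $G^{spec}(X)$ and against the $Y$ in each critical point yields the same verdict; hence $V(1^K,G,\ldots)(r)=V(1^K,G',\ldots)(r)$ whp.

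Next I would prove condition~\eqref{eqcorrectness2}, namely that the honest verifier $V$ that actually interacted with the developer agrees whp with the hardwired honest verifier $V'$ inside $VS.Eval$. Here I would argue that, reading $QA_E$, $VS.Eval$ re-derives via line~\ref{line3vseval} the value $FHE.Eval(hpk,U,E_{C_r},u)$ and rejects unless it equals the recorded answer $v$; since $FHE.Eval$ is deterministic, an honest developer's recorded replies necessarily match, and a deviation is caught. Then $VS.Eval$ reruns $V'$ with the same seed $s$ (line~\ref{lineVGA}) on the same $G'$, and since $V'$ and $V$ use the same random bits and the same (now-verified) table outputs, they evaluate exactly the same tables and reach the same output; if the originally-interacting verifier had been dishonest — evaluating different tables, or reporting a different verdict — the checks in the two \textbf{ForAll} loops or the final output comparison fire and $VS.Eval$ returns $0$. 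Thus $VS.Eval(1^K,QA_E)=1$ exactly when both conditions hold, which is the statement of Definition~\ref{correctness3}.

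I expect the main obstacle to be the bookkeeping in condition~\eqref{eqcorrectness1}: one must be careful that the memory $M$ in $VS.Encode$ is threaded consistently so that, during the $G'$ evaluation, every intermediate input presented in Case~2 passes the membership checks of lines~\ref{lineerror1}--\ref{lineerror3} exactly when the corresponding value is non-$null$ in the $G$-evaluation, and that the $\top/\bot$ dichotomy encoded in $F$ by the single-row transformation (Section~\ref{sub:tables}) is faithfully reflected through FHE evaluation so that ``bogus'' outputs are filtered identically in $G$ and $G'$. Making the induction hypothesis precise enough to cover $null$, $(\bot,\bot)$, and genuine values simultaneously — and checking that the special-symbol handling in the transformed tables commutes with homomorphic evaluation — is the delicate part; the probability accounting itself is a routine union bound once the structural correspondence is pinned down.
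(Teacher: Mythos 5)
Your proposal is correct and follows essentially the same route as the paper: an induction on table levels showing that the evaluation of $G'$ produces $FHE.Enc(hpk,\cdot)$ of the corresponding values in $G$ (the paper's Lemma~\ref{induction}, yielding Lemma~\ref{linput} and condition~\eqref{eqcorrectness1}), combined with the replay-against-the-hardwired-honest-verifier argument via $QA_E$ for condition~\eqref{eqcorrectness2}. Your explicit appeal to multi-hop homomorphism and the union bound over polynomially many tables only makes precise what the paper leaves implicit.
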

\begin{proof}
We will need the following lemma: 

\begin{lemma}   \label{lcorrectness}
When $VS.Eval$ outputs $1$, for any table $T' \in TS'$ and $x$, the output $y$ claimed by $V$ as $T'(x)$ must be equal to $T'(x)$, i.e., $V$ evaluates every table correctly.
\end{lemma}
\begin{proof}
Given that the verifier $V'$ used by $VS.Eval$ is an honest one, the lemma is obviously true.
\end{proof}

For the the {\em (query,answer)} pairs in $QA_E$ that do not belong to a consistent traversal of $G'$, Algorithm \ref{encode} (lines \ref{lineerror1}, \ref{lineerror2}, \ref{lineerror3}, \ref{lineerror4}) will return $null$ as the corresponding input encodings, and Algorithm \ref{evalalg} (lines \ref{geval1}, \ref{geval2}) will compute $null$ for tables with such inputs. Therefore, security and correctness are not an issue in this case. From now, on we will concentrate on the consistent table traversals and their inputs/outputs.

\begin{lemma}   \label{induction}
Given the external input set $X$, common to both $G'$ and $G$, for any table $T_i \in TS$ with input $x^i$ and output $T_i(x^i)$, there is a corresponding table $T'_i \in TS'$ and input $w^i$, such that
\begin{equation}  \label{eqinduction}
\begin{split}
w^i &= FHE.Enc(hpk,x^i) \\
T'_i(w^i) &= FHE.Enc(hpk,T_i(x^i)),
\end{split}
\end{equation}
where $hpk$ is the public key generated by $VS.Encrypt(1^K,G)$.
\end{lemma}
\begin{proof}
We prove this lemma by induction on the level $k$ of a table.

\noindent{\bf Base case:} Level $1$ tables have only external inputs, and, therefore, Algorithm \ref{encode} (line \ref{basecase}) returns $w^i=FHE.Enc(hpk,x^i)$. Also, from line \ref{evalpt} of Algorithm \ref{evalalg} we have 
 \begin{align*}
 T'_i(w^i) &= FHE.Eval(hpk,U,E_{C_i},w^i) \\
  &= FHE.Eval(hpk,U,E_{C_i},FHE.Enc(hpk,x^i))\\
  &= FHE.Enc(hpk,FHE.Dec(hsk,FHE.Eval(hpk,U,E_{C_i},FHE.Enc(hpk,x^i)))) \\
  &= FHE.Enc(hpk,U(S_{C_i},x^i)) = FHE.Enc(hpk,T_i)
 \end{align*} 
 
\noindent{\bf Inductive step:} Suppose that for any table $T_i \in TS$ whose level is smaller than $k+1$ and its input $x^i$, table $T'_i \in TS'$ and its input $w^i$ satisfy \eqref{eqinduction}.
Then we show that for any level $k+1$ table $T_j \in TS$ with input $x^j=\{x^j_1,x^j_2,...,x^j_s\}$, the level $k+1$ table $T'_j \in TS'$ and its \emph{encoded} input $w^j$ satisfy \eqref{eqinduction}.

For any $x^j_p \in x^j$, either $x^j_p=T_{i_p}(x^{i_p})$ for some $i_p$, or $x^j_p$ is an external input. In case $x^j_p$ is external, then $w^j_p$ satisfies \eqref{eqinduction} like in the base case. Otherwise, $T_{i_p}$ is a table of level smaller than $k+1$. Then, because of the inductive hypothesis,  
\begin{equation*}
w^j = \{w^j_1,...,w^j_s\} = \{FHE.Enc(hpk,x^j_1),...,FHE.Enc(hpk,x^j_s)\} =  FHE.Enc(hpk,x^j).
\end{equation*} 
(The last equation is valid because $FHE.Enc$ encrypts a string bit by bit.) 
  
Now, the second part of \eqref{eqinduction} is proven similarly to the base case.
\end{proof}

\begin{lemma}\label{linput}
The input-output functionality of $G'$ is the same as the input-output functionality of $G$.
\end{lemma}
\begin{proof}
Given the common external input set $X$ to both $G'$ and $G$, suppose $T_{i_1},...,T_{i_s} \in TS$ are the output level tables that are actually evaluated, and $y_1,\ldots,y_s$ their corresponding outputs. Then Lemma \ref{induction} implies that
$T'_{i_1}(w^{i_1})=FHE.Enc(hpk,y_1),\ldots,T'_{i_s}(w^{i_s})=FHE.Enc(hpk,y_s).$
Accordingly, for every $j \in [s]$, by asking the developer to call $VS.Encode(i_j,w^{i_j},T'_{i_j}(w^{i_j}),q_2)$, verifier $V$ gets $y_1,...,y_s$ as the output of Algorithm \ref{encode} (line \ref{line30}).  
\end{proof}  

Lemma \ref{induction} (and hence Lemma \ref{linput}) are based on the verifier evaluating every table correctly and its subgraph traversal being consistent. Assuming that the traversal is indeed consistent (something easy to check on $G_{struc}$), by running $VS.Eval$, we can know whether $V$ satisfies Lemma \ref{lcorrectness} ($VS.Eval$ outputs $1$). If this is the case, we know that $V$'s evaluation of $G'$ has the same result as the evaluation of $G$ with the same inputs, i.e., \eqref{eqcorrectness1} in Definition \ref{correctness3} holds. Moreover, if $VS.Eval(1^K,QA_E)$ outputs $1$, \eqref{eqcorrectness2} in Definition \ref{correctness3} also holds (see lines \ref{lineVGA} - \ref{lineQA} in Algorithm \ref{eval}).
\end{proof}

\subsubsection{Security}

Following the approach of~\cite{goldwasser2013reusable}, we prove the following
\begin{theorem}\label{thm1}
The verification scheme VS introduced in Section \ref{sec:construct} satisfies Definition \ref{security3}.
\end{theorem}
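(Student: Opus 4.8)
The plan is to exhibit a simulator pair $S=(S_1,S_2)$ and an oracle $O$ so that $Exp^{ideal}$ is computationally indistinguishable from $Exp^{real}$, and to reduce this indistinguishability to the IND-CPA security of the underlying FHE scheme (Definition \ref{defINDCPA}), following the template of \cite{goldwasser2013reusable}. The observation driving the construction is that the encrypted table graph $G'$ output by $VS.Encrypt$ consists only of the public structure graph $G_{struc}$ (which the adversary $A_1$ already produced and hence knows) together with, for each table, an FHE ciphertext $E_{C_i}=FHE.Enc(hpk,S_{C_i})$ of a universal-circuit string; similarly, every reply that $VS.Encode$ gives back to $A_2$ in the two relevant cases is either an FHE ciphertext of the chosen input (Case 1, line \ref{basecase}) or, in Case 2, a decrypted output value that $A_2$ is itself entitled to learn because it is an (alleged) external output, plus possibly the symbol $\top$ indicating ``meaningful''.

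First I would define $S_1(1^K)$: it runs $FHE.KeyGen(1^K)$ to get $(hpk,hsk)$, reads off the public structure graph $G_{struc}$ (available from $A_1$'s output, which both experiments pass along), and for each node outputs a dummy ciphertext $FHE.Enc(hpk,\mathbf{0})$ of the all-zeros string of the appropriate length, assembling $\tilde G=[\widetilde{TS},G_{struc}]$ with these dummy encrypted tables; it also outputs $hpk$ and the (public, input-independent) universal circuit $U$. Next I would define the oracle $O$: it simply holds the real table graph $G$ (equivalently, the plaintext circuits $C_i$) and, on a query describing a table index and an input, returns the true evaluated plaintext value $T_i(x^i)$ (or $null$, or the $\top/\bot$ flag), exactly what an honest developer running the real $VS.Encode$ would eventually reveal. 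Then $S_2^{O}$, on a query from $A_2$, mirrors the control flow of Algorithm \ref{encode}: in Case 1 it returns $FHE.Enc(hpk,u)$ and records $(i,w,null)$ in its own memory; in Case 2 it performs the same structural/consistency checks against its memory, runs $FHE.Eval$ on the \emph{dummy} ciphertexts to obtain $s_1\ldots s_m$, checks the claimed output $v$ against it, and for the returned plaintext bits it calls the oracle $O$ instead of decrypting — so the value handed to $A_2$ is the genuine plaintext, while nothing about the hidden circuits is ever decrypted. The point is that $S_2^O$ reproduces precisely the distribution of replies that $VS.Encode$ produces, \emph{except} that the ciphertexts sitting inside $\tilde G$ encrypt zeros rather than the true $S_{C_i}$, and the $FHE.Eval$ outputs are therefore encryptions of garbage rather than of the true $C_i(x)$ — but these $FHE.Eval$ outputs are never handed to $A_2$ in plaintext form in Case 2 when $v$ is internal (only the $\top$ flag is), and when $v$ is an external output the plaintext is supplied by $O$, which matches $Exp^{real}$.

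The reduction is then a standard hybrid argument: I would interpolate between $Exp^{real}$ and $Exp^{ideal}$ by a sequence of hybrids in which the encrypted circuit strings $S_{C_1},\ldots,S_{C_n}$ are replaced one at a time by zero strings, and within each such string bit-by-bit; consecutive hybrids differ only in a single one-bit FHE ciphertext, so a distinguisher between them yields an IND-CPA distinguisher for $FHE$, contradicting Definition \ref{defINDCPA}. Since there are $\mathrm{poly}(K)$ hybrids and each gap is $negl(K)$, the total advantage is $negl(K)$. The main obstacle — and the step deserving the most care — is checking that $S_2^O$ really does produce an output \emph{identically} distributed to $VS.Encode$ on every branch of Algorithm \ref{encode}: in particular one must verify that the verifier/adversary's view in Case 2 never contains a decryption of an $FHE.Eval$ output of an internal table (only the sanitized $\top$/plaintext-of-external-output), for otherwise the substitution of zeros would be detectable; and one must confirm that the memory-consistency checks (lines \ref{lineerror1}--\ref{lineerror4}) behave the same against dummy ciphertexts as against real ones, which holds because those checks compare stored ciphertexts for equality and re-run the deterministic $FHE.Eval$, operations whose outcomes depend only on what was stored, not on the plaintexts. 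Once this branch-by-branch matching is established, indistinguishability of the full experiments follows from the hybrid over the FHE ciphertexts as above.
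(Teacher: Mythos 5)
Your proposal follows essentially the same route as the paper's proof: a simulator $S_1$ that publishes an FHE encryption of a dummy table graph over the known $G_{struc}$, a simulator $S_2$ whose oracle holds the real $G$ and supplies the genuine plaintext replies so that the simulated $VS.Encode$ matches the real one branch-by-branch (the paper isolates this as a separate lemma via an intermediate experiment $Exp$), and a hybrid argument swapping the encrypted circuit strings one table at a time with the gap between consecutive hybrids reduced to IND-CPA security of the FHE scheme. Your bit-by-bit refinement within each $S_{C_i}$ makes explicit a step the paper compresses into its single-table lemma, but the decomposition and the key ideas are the same.
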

\begin{proof}
In Definition \ref{security3} the two simulators $S_1$ and $S_2$ simulate $VS.Encrypt$ and $VS.Encode$ respectively. In our implementation we have added $VS.Path$, so we add $S_3$ to simulate $VS.Path$, together with its own oracle $O_2$. We can think of $(S_2,S_3)$ as a combination that plays the role of $S$, and $(O_1,O_2)$ as a combination that plays the role of $O$ in $Exp^{ideal}$ of Figure \ref{exprl}: depending on the kind of query posed by $A_2$, $S_2^{O_1}$ replies if it is an "Encode" query, and $S_3^{O_2}$ replies if it is a "Path" query. The tuple of simulators $(S_1,S_2,S_3)$ which satisfies Definition \ref{defenc} is constructed as follows:
\begin{itemize}
\item $S_1(1^K,s,d,G_{struc})$ runs in two steps:
\begin{description}
\item[Step 1:] $S_1$ generates its own table graph $\tilde{G}=(\tilde{TS},G_{struc})$ with its own tables $\tilde{T}_i \in \tilde{TS}$. 
\item[Step 2:] $S_1$ runs $VS.Encrypt(1^K,\tilde{G})$ (Algorithm \ref{encrypt}) and obtains $G'',hpk,U$.
\end{description} 
\item $S_2$ receives queries from the $A_2$ of Definition \ref{security3}, and can query its oracle $O_1$ (described in Algorithm \ref{oracle}). $S_2^{O_1}$ actually simulates $VS.Encode$, passing the queries from $A_2$ to its $VS.Encode$-like oracle $O_1$. $O_1$ has a state (or memory) $[M]$ which is initially empty, and contains the mapping of the encrypted table output to the real table output for a given query. $S_2$ returns the outputs of $O_1$ as the answers to the queries of $A_2$. 
\item $S_3^{O_2}$ simulates $VS.Path$. It receives queries from $A_2$, and queries its oracle $O_2$ (described in Algorithm \ref{oracle2}) in turn. Then $S_2$ returns the output of $O_2$ to $A_2$. 
\end{itemize}   

\begin{algorithm}[h!]
\caption{$O_1[M](i,u,v,q)$}\
\begin{algorithmic}[1]   \label{sim2}
\If { $q=q_1$}
\Comment Case $q_1$
\If {$|u| \neq m$ or first component of $u$ is not $\top$}  
\Comment $m$ defined in line \ref{encryptm} of Algorithm \ref{encrypt}
\State \Return $null$
\Else
\State $p \leftarrow FHE.Enc(hpk,u)$
\State store $(i,(p,null),(u,null))$ in $M$
\State \Return $p$
\EndIf
\EndIf    \label{lineqkindo}
\If {$q=q_2$}    \label{lineofunc1}
\Comment Case $q_2$
\ForAll {$x^i_j \in u$}
\Comment $u=(x^i_1,x^i_2,\ldots)$ 
\If {$x^i_j$ is an output of some $T''_k$, according to $G''$}    \label{lineoerror3}
  \If{$\exists (k,(y^k,T''_k(y^k)),(z^k,T(z^k))) \in M$ such that $x^i_j=T''_k(y^k)$}
  \State $e^i_j \leftarrow T_k(z^k)$
  \If{$T_k(z^k) = (\bot,\bot)$} 
  \Return $null$   
  \EndIf
  \Else \ \Return $null$     \label{lineoerror4}
  \EndIf
\ElsIf{$x^i_j$ is an external input, according to $G''$}    \label{lineoerror1}
  \If {$\exists (i,(y^i_j,null),(z^i_j,null)) \in M$}
  \State $e^i_j \leftarrow z^i_j$
  \Else \ \Return $null$\label{lineoerror2}
  \EndIf
\EndIf
\EndFor 
\ForAll {$i \in \{1,...,m\}$}
  \State $s_i=FHE.Eval(hpk,U_i,u,E_{C'_i})$
  \Comment recall that $U=(U_1,U_2,\ldots,U_m)$ 
\EndFor
\If {$[s_1s_2...s_m] \neq v$} \Return $null$    \label{lineoerror6}
  \Else
  \State $e^i \leftarrow e^i_1e^i_2...e^i_m$
  \State store $(i,(u,v),(e^i,T_i(e^i)))$ in $M$ \label{lineofunc2}
  \EndIf
  \If {$T_i(e^i) \neq (\bot,\bot)$ and is not an external output} \Return $\top$ \label{lineofunc3}
  \Else \ \Return the second component of $T_i(e^i)$
  \EndIf \label{lineofunc4}
\EndIf
\end{algorithmic}
\label{oracle}
\end{algorithm}

\begin{algorithm}
\caption{$O_2(T'_1,\ldots,T'_p)$}
\begin{algorithmic}
  \If {$T'_1,\ldots,T'_p$ form a path $P=T'_1 \rightarrow T'_2 \rightarrow \ldots \rightarrow T'_p$}
  \State Generate an external input $X$ to the table graph $G$ such that the evaluation of $G(X)$ includes the evaluation of tables $T_1,\ldots,T_p$. 
  \Comment $X$ is generated as in Algorithm \ref{Path}.
  \State \Return X
  \Else
  \ \Return $null$
  \EndIf
\end{algorithmic}
\label{oracle2}
\end{algorithm}

First, we construct an experiment $Exp$ (see Figure \ref{exp}), which is slightly different to $Exp^{real}$ in Definition \ref{security3}. In $Exp$, the queries of $A_2$ are not answered by calls to $VS.Encode$ and $VS.Path$, but, instead, they are answered by calls to $S_2^{O_1}$ and $S_3^{O_2}$ (recall that $S_2^{O_1}$ and $S_3^{O_2}$ together are the simulator $S_2^O$). We use the shorthands $O'_1$ and $O'_2$ for $S_2^{O_1}$ and $S_3^{O_2}$ in $Exp$ respectively. 

\begin{figure}[h!]
\centering
{\noindent\begin{minipage}[t]{\textwidth}
\begin{center}
\begin{tabular}{|l|}
\hline
    $Exp(1^K)$: \\ \hline
    1. $(G,CP,state_A) \leftarrow A_1(1^K)$ \\
    2. $(G',hpk,U) \leftarrow VS.Encrypt(1^K,G)$  \\
    3. $a \leftarrow A^{O'_1,O'_2}_2(1^K,G',G,CP,hpk,U,state_A)$  \\
    4. Output $a$  \\ \hline
\end{tabular}
\end{center}
\end{minipage}}
\caption{Experiment $Exp$}  \label{exp}
\end{figure}

\begin{lemma}\label{first}
Experiments $Exp^{real}$ and $Exp$ are computationally indistinguishable.
\end{lemma} 
\begin{proof}
First, note that $VS.Path$ and $O'_2$ have the same functionality (see Algorithms \ref{vspath} and \ref{oracle2}). We prove that $VS.Encode$ and $O'_1$ have the same input-output functionality. By the construction of $VS.Encode$ (see lines 1 - \ref{lineqkindencode} in Algorithm \ref{encode}) and $O'_1$ (see lines 1 - \ref{lineqkindo} in Algorithm \ref{oracle}), when a query is of the form $(i,u,v,q_1)$, both algorithms will output the FHE encryption of $u$, so they have the same input-output functionality. The case of a query $(i,u,v,q_2)$ is more complex; there are two cases: 

\noindent{\bf Case 1: Algorithm \ref{encode} outputs $null$.} This happens in the following cases:
\begin{enumerate}
\item An intermediate input $x^i_j \in u$ which should be the output of $T'_k$, is not $T'_k$'s output (see line \ref{lineerror1}).
\item An intermediate input $x^i_j \in u$ is the output of a $T'_k$, but $T'_k$'s output is $(\bot,\bot)$ (see line \ref{lineerror2}).
\item The FHE encryption of an external input $x^i_j \in u$ cannot be found in $VS.Encode$'s memory $M$, where it should have been if it had already been processed (as it should) by $VS.Encode$ (see line \ref{lineerror3}).
\item $T'_i(u) \neq v$ (see line \ref{lineerror4}).
\end{enumerate}
These four cases will also cause Algorithm \ref{oracle} to output $null$ in lines \ref{lineoerror3}-\ref{lineoerror4} (for the first and second), lines \ref{lineoerror1} - \ref{lineoerror2} (for the third), and line \ref{lineoerror6} (for the fourth).

\noindent{\bf Case 2: Algorithm \ref{encode} doesn't output $null$.} Suppose $X$ is the external input to $G'$ and $x^i$ is the input to $T'_i$. There are three cases for $VS.Encode(i,x^i,T'_i(x^i),q_2)$ (see lines \ref{lineencode1} - \ref{lineencode2} in Algorithm \ref{encode}):
\begin{enumerate}
\item If $T'_i$'s output is an intermediate output and the FHE encryption of $(\bot,\bot)$, then $VS.Encode$ decrypts $T'_i(x^i)$, and outputs $\bot$.
\item If $T'_i$'s output is an intermediate output and not the FHE encryption of $(\bot,\bot)$, then $VS.Encode$ decrypts $T'_i(x^i)$, and outputs $\top$.
\item If $T'_i$'s output is an external output, then $VS.Encode$ decrypts $T'_i(x^i)$, and outputs the second component of $FHE.Dec(T'_i(x^i))$ (i.e., the actual output).
\end{enumerate}
On the other hand, when $O'_1(i,x^i,T'_i(x^i),q_2)$ calculates $T_i(u^i)$ at $u^i$ (the unencrypted $x^i$) in lines \ref{lineofunc1} - \ref{lineofunc2} of Algorithm \ref{oracle}, there are three cases (see lines \ref{lineofunc3} - \ref{lineofunc4} in Algorithm \ref{oracle}):
\begin{enumerate}
\item If $T'_i$'s output is an intermediate output and $T_i(u^i)$ is $(\bot,\bot)$, the output is $\bot$.
\item If $T'_i$'s output is an intermediate output and $T_i(u^i)$ is not $(\bot,\bot)$, the output is $\top$.
\item If $T'_i$'s output is an external output, the output is the second component of $T_i(u^i)$.
\end{enumerate} 
The first two cases are the same for $VS.Encode(i,x^i,T'_i(x^i),q_2)$ and $O'_1(i,x^i,T'_i(x^i),q_2)$. In the third case, Algorithm \ref{encode} outputs the second half of $FHE.Dec(T'_i(x^i))$, while Algorithm \ref{oracle} outputs the second component of $T_i(u^i)$, which are the same because of  \eqref{eqinduction}. Therefore, $VS.Encode$ and $O'_1$ have the same input-output functionality.
\end{proof}
 
In order to prove Theorem \ref{thm1}, we first prove the security of a single table:
\begin{lemma}    \label{abe2}
For every p.p.t. adversary $A=(A_1,A_2)$, and any table $T \in G$, consider the following two experiments:
\begin{figure}[h!]
\centering
  \subfloat[Experiment $SingleT^{real}$]{%
    {\noindent\begin{minipage}[t]{\textwidth}
    \begin{center}
    \begin{tabular}{|p{0.01\textwidth}p{0.9\textwidth}|}
    \hline
    & $SingleT^{real}(1^K)$ \\ \hline
    1. &$(G,CP,state_A) \leftarrow A_1(1^K)$. \\
    2. &$(hpk,hsk) \leftarrow FHE.KeyGen(1^K)$  \\ 
    3. &Let $C$ be a circuit that computes $T$'s function. \\
    4. &Generate universal circuit $U=(U_1,\ldots,U_m)$ and string $S_{C}$, such that $U(S_{C},x)=C(x)$.  \\ 
    5. &$E_{C} \leftarrow FHE.Enc(hpk,S_{C})$ \\
    6. &$ a \leftarrow A_2(1^K,(True,E_{C}),G,CP,hpk,U,state_A)$   \\   
    7. &Output $a$  \\ \hline
    \end{tabular}
    \end{center}
    \end{minipage}}} \\
  \subfloat[Experiment $SingleT^{ideal}$]{%
    {\noindent\begin{minipage}[t]{\textwidth}
    \begin{center}
    \begin{tabular}{|p{0.01\textwidth}p{0.9\textwidth}|}
    \hline
    &$SingleT^{ideal}(1^K)$ \\ \hline
    1. &$(G,CP,state_A) \leftarrow A_1(1^K)$.  \\
    2. &$(hpk,hsk) \leftarrow FHE.KeyGen(1^K)$ \\
    3. &Let $\tilde{G}$ be produced by Step 1 of $S_1(1^K)$  \\
    4. &Construct circuit $\tilde{C}$ that computes the function of $\tilde{T} \in \tilde{TS}$.   \\
    5. &Generate universal circuit $U=(U_1,...,U_m)$ and string $S_{\tilde{C}}$, such that $U(S_{\tilde{C}},x)=\tilde{C}(x)$  \\
    6. &$E_{\tilde{C}} \leftarrow FHE.Enc(hpk,S_{\tilde{C}})$   \\
    7. &$a  \leftarrow A_2(1^K,(True,E_{\tilde{C}}),G,CP,hpk,U,state_A)$ \\
    8. &Output $a$  \\ \hline
    \end{tabular}
    \end{center}
    \end{minipage}}}   
    \caption{Experiments $SingleT^{real}$ and $SingleT^{ideal}$}  \label{experi}
\end{figure}

The outputs of the two experiments are computationally indistinguishable.
\end{lemma}
\begin{proof} 
If $A_2$'s inputs in $SingleT^{real}$ and $SingleT^{ideal}$ are computationally indistinguishable, then $A_2$'s outputs are  also computationally indistinguishable. The former is true, because $E_C$ and $E_{\tilde{C}}$ are two FHE ciphertexts, and, therefore, they are computationally indistinguishable under the IND-CPA security of FHE.        
\end{proof}

We generate a sequence of $n+1$ different table graphs, each differing with its predecessor and successor only at one table:
\[ TS^i=(T'_1,\ldots,T'_{i-1}, T'_i, \tilde{T}_{i+1},\ldots,\tilde{T}_n) \text{ for } i=0,1,\ldots,n. \]      
All these new table graphs have the same structure graph $G_{struc}$. To each $TS^i,\ i\in\{0,\ldots,n\}$ corresponds the experiment $Exp^{i}$ in Figure \ref{expi}.
\begin{figure}[h!]
\centering
{\noindent\begin{minipage}[t]{\textwidth}
\begin{center}
\begin{tabular}{|p{0.01\textwidth}p{0.8\textwidth}|}
\hline
  &$Exp^i(1^K)$ \\ \hline
  1. &$(G,CP,state_A) \leftarrow A_1(1^K)$ \\
  2. &$(hpk,hsk) \leftarrow FHE.KeyGen(1^K)$  \\
  3. &Generate universal circuit $U=(U_1,\ldots,U_m)$ \\   
  4. &For $j=1,\ldots,i$  \\
  &\tabitem Construct $C_j$ with an $m$-bit output and computes the function of $T_j \in G$. \\
  &\tabitem Generate string $S_{C_j}$ such that $U(S_{C_j},x)=C_j(x)$.  \\
  &\tabitem $E_{C_j} \leftarrow FHE.Enc(hpk,S_{C_j})$  \\
  5. &Let $\tilde{G}$ be produced by Step 1 of $S_1(1^K)$  \\
  6. &For $j=i+1,...,n$   \\
  &\tabitem Construct circuit $\tilde{C}_j$ computing the function of $\tilde{T_j}\in \tilde{G}$. \\
  &\tabitem Generate string $S_{\tilde{C}_j}$ such that $U(S_{\tilde{C}_j},x)=\tilde{C}_j(x)$.  \\
  &\tabitem $E_{\tilde{C}_j} \leftarrow FHE.Enc(hpk,S_{\tilde{C}_j})$   \\ 
  7. &$G^i \leftarrow [\{(True,E_{C_1}),\ldots,(True,E_{C_i}),(True,E_{\tilde{C}_{i+1}}),\ldots,(True,E_{\tilde{C}_{n}})\},G_{struc}]$  \\
  8. &$ a \leftarrow A_2^{O'_1,O'_2}(1^K,G^i,G,CP,hpk,U,state_A)$ \\
  9. &Output $a$ \\ \hline
\end{tabular}
\end{center}
\end{minipage}}
\caption{Experiment $Exp^i$}  \label{expi}
\end{figure}

Note that $Exp^0$ is the same experiment as $Exp^{ideal}$, since Step 5 is the first step of $S_1$ and Steps 2,3,6,7 are doing
exactly what the second step of $S_1$ does (i.e., $VS.Encrypt$). Also note that $Exp^{n}$ is the same as $Exp$ in Figure \ref{exp}, since the $\tilde{G}$ part of $Exp^{n}$ is ignored, and $G^n$ is the results of $VS.Encrypt(1^K,G)$, i.e., $G^n=G'$.   
   
Now we are ready to prove that $Exp^{real}$ is computationally indistinguishable from $Exp^{ideal}$ by contradiction. Assume that $Exp^{real}$ and $Exp^{ideal}$ are computationally distinguishable, and, therefore, $Exp^0$ and $Exp^n$ are computationally distinguishable, i.e., there is a pair of p.p.t. adversaries $A=(A_1,A_2)$ and a p.p.t. algorithm $D$ such that
\begin{equation}   \label{eq:contra}
\left|Pr[D(Exp^{0}(1^K))=1]-Pr[D(Exp^{n}(1^K))=1]\right|>negl(K).
\end{equation}
Since 
\begin{multline*}
\left|Pr[D(Exp^{0}(1^K))=1]-Pr[D(Exp^{n}(1^K))=1]\right| \leq \\ \sum_{i=0}^{n-1} \left|Pr[D(Exp^{i}(1^K))=1]-Pr[D(Exp^{i+1},(1^K))=1]\right|
\end{multline*} 
inequality \eqref{eq:contra} implies that there exists $0\leq i\leq n-1$ such that
\[  \left|Pr[D(Exp^{i}(1^K))=1]-Pr[D(Exp^{i+1}(1^K))=1]\right|>negl(K)/n=negl(K),  \]
We use $A=(A_1,A_2)$ to construct a pair of p.p.t. algorithms $A'=(A'_1,A'_2)$ which together with p.p.t. algorithm $D$ contradict Lemma \ref{abe2}, by distinguishing $SingleT^{real}$ to $SingleT^{ideal}$ . Specifically, they determine whether Step 6 of $SingleT^{real}$ or Step 7 of $SingleT^{ideal}$ has been executed. $A'_2$ can distinguish the two experiments, if it can distinguish between its two potential inputs $(1^K, (True,E_{C_{i+1}}),U,hpk,state_{A'})$ and $(1^K, (True,E_{\tilde{C}_{i+1}}),U,hpk,state_{A'})$. The idea is to extend the table $(True,E_{C_{i+1}})$ or $(True,E_{\tilde{C}_{i+1}})$ (whichever the case) into a full table graph $G^i$ or $G^{i+1}$ (whichever the case), appropriate for experiments $Exp^i$ or $Exp^{i+1}$ (whichever the case), and invoke $A_2$ which can distinguish between the two. $A'$ is described in Figure \ref{A'}, where the table graph $H$ is either $G^i$ or $G^{i+1}$. Hence, by the construction of $A'$, we know that $D$ can be used to distinguish between experiments $SingleT^{real}$ and $SingleT^{ideal}$ for $T_{i+1}$.

\begin{figure}[h!]
\centering
  \subfloat[Algorithm $A'_1$]{%
    {\noindent\begin{minipage}[t]{\textwidth}
    \begin{center}
    \begin{tabular}{|p{0.01\textwidth}p{0.9\textwidth}|}
    \hline
    &$A'_1(1^K)$ \\ \hline
    1. &$(G,CP,state_A) \leftarrow A_1(1^K)$  \\
    2. &Construct a circuit $C_{i+1}$, and a string $S_{C_{i+1}}$ such that $U(S_{C_{i+1}},x)=C_{i+1}(x)$ \\
    3. &$E_{C_{i+1}} \leftarrow FHE.Enc(hpk,S_{C_{i+1}})$   \\
    4. &$T_{i+1} \leftarrow (True,E_{C_{i+1}})$  \\
    5. &Output $(G,CP,state_{A'})$ \\ \hline
    \end{tabular}
    \end{center}
    \end{minipage}}} \\
  \subfloat[Algorithm $A'_2$]{%
    {\noindent\begin{minipage}[t]{\textwidth}
    \begin{center}
    \begin{tabular}{|p{0.01\textwidth}p{0.9\textwidth}|}
    \hline
    &$A'_2(1^K)$ \\ \hline
    1. &For $j=1,\ldots,i$  \\
    &\tabitem Construct circuit $C_j$ that computes the function of $T_j \in G$. ($G$ comes from $A'_1$) \\
    &\tabitem Generate string $S_{C_j}$ such that $U(S_{C_j},x)=C_j(x)$.   \\ 
    &\tabitem $E_{C_j} \leftarrow FHE.Enc(hpk,S_{C_j})$  \\ 
    &\tabitem Construct table $(True,E_{C_j})$   \\
    2. &Construct $\tilde{G}=[\tilde{TS},G_{struc}]$ just as $S_1$ does.  \\
    3. &For $j=i+2,\ldots,n$  \\
    &\tabitem Construct circuit $\tilde{C}_j$ that computes the function of $\tilde{T}_j \in \tilde{G}$ \\
    &\tabitem Generate string $S_{\tilde{C}_j}$ such that $U(S_{\tilde{C}_j},x)=\tilde{C}_j(x)$.  \\ 
    &\tabitem $E_{\tilde{C}_j} \leftarrow FHE.Enc(hpk,S_{\tilde{C}_j})$   \\  
    &\tabitem Construct table $(True,E_{\tilde{C}_j})$  \\ 
    4. &$H = [\cup_{l=1}^i(True,E_{C_l})\cup [(True,E_{C_{i+1}}) \text{ or } (True,E_{\tilde{C}_{i+1}})] \cup_{l=i+2}^n (True,E_{\tilde{C}_l}), G_{struc}]$   \\
    5. &$ a \leftarrow A_2^{O'_1,O'_2}(1^K, H,G,CP,hpk,U,state_{A'})$   \\
    \hline
    \end{tabular}
    \end{center}
    \end{minipage}}}   
    \caption{Algorithms $A'=(A'_1,A'_2)$}  \label{A'}
\end{figure}
     
\end{proof}

Theorems \ref{thm3} and \ref{thm1} imply
\begin{theorem}   \label{main}
Our verification scheme satisfies Definition \ref{defenc}.
\end{theorem}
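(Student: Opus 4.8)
The plan is to observe that Theorem \ref{main} is an immediate consequence of the two preceding theorems together with the bookkeeping in Definition \ref{defenc}. That definition says a verification scheme (in the sense of Definition \ref{defe}) is \emph{secure and trusted} precisely when it meets two independent requirements: the \emph{correctness} requirement of Definition \ref{correctness3} and the \emph{security} requirement of Definition \ref{security3}. So the proof reduces to checking that the scheme $VS=(VS.Encrypt,VS.Encode,VS.Eval)$ built in Section \ref{sec:construct} (with $QA_E$ playing the role of the certificate) really is a verification scheme of the required form, and then quoting the two theorems that establish each requirement.

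First I would note that Section \ref{sec:construct} gives concrete p.p.t. implementations of $VS.Encrypt$ (Algorithm \ref{encrypt}), $VS.Encode$ (Algorithm \ref{encode}) and $VS.Eval$ (Algorithm \ref{eval}), with an honest verifier $V'$ hardcoded in $VS.Eval$, so $VS$ has exactly the shape demanded by Definition \ref{defe}; the auxiliary $VS.Path$ (Algorithm \ref{vspath}) is an optional add-on that does not alter this. Then I would invoke Theorem \ref{thm3}, which supplies the correctness half: via Lemmas \ref{lcorrectness}--\ref{linput} it shows that conditions \eqref{eqcorrectness1} and \eqref{eqcorrectness2} hold whenever $VS.Eval(1^K,QA_E)=1$, i.e. Definition \ref{correctness3} is met. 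Finally I would invoke Theorem \ref{thm1}, which supplies the security half: the simulator tuple constructed there (the pair $(S_2,S_3)$ together acting as the $S_2$ of Definition \ref{security3}, with the combined oracle $(O_1,O_2)$ acting as $O$) makes $Exp^{ideal}$ and $Exp^{real}$ computationally indistinguishable, through the hybrid chain $Exp^0,\ldots,Exp^n$ and the single-table indistinguishability Lemma \ref{abe2}, so Definition \ref{security3} is met. Having both halves, Definition \ref{defenc} is satisfied by $VS$, which is exactly the assertion of Theorem \ref{main}.

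Since the two constituent theorems are already in hand, there is essentially no new mathematical content here and hence no genuine obstacle; the proof is a one-line appeal to Definition \ref{defenc}. The only point that warrants a moment's care is a consistency check: confirming that Theorems \ref{thm3} and \ref{thm1} are stated for the \emph{same} scheme instance and under mutually compatible conventions — in particular that the "consistent traversal" restriction used throughout the correctness argument and the "honest developer" setting of Section \ref{sec:honest} are precisely the setting in which the security experiments of Theorem \ref{thm1} are run — so that the conjunction of their two conclusions is meaningful. Once this is observed, concluding the theorem is immediate.
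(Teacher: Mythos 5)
Your proposal is correct and matches the paper's argument exactly: the paper derives Theorem \ref{main} as an immediate consequence of Theorems \ref{thm3} and \ref{thm1}, since Definition \ref{defenc} is by construction the conjunction of Definitions \ref{correctness3} and \ref{security3}. Your additional consistency checks (that the scheme has the shape required by Definition \ref{defe} and that both theorems concern the same scheme instance) are sensible but add nothing beyond what the paper implicitly assumes.
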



\section{Secure and trusted verification for general developers}  \label{sec:maldev}

In general, the developer may not comply with Definition \ref{honest}, i.e., the developer can actually replace $VS.Encode$ with some other malicious algorithm $VS.Encode'$ in its interaction with the verifier. If we do not provide a method to prevent this scenario from happening, then a buggy implementation could pass the verifier's verification when it actually should not. 

Bearing this in mind, we replace our old definition of a verification scheme $VS$ with a new one in Definition \ref{generalvs2} below, by adding an algorithm $VS.Checker$, which the verifier can ask the developer to run in order to determine whether the latter indeed runs $VS.Encode$. $VS.Checker$ itself is also run by the developer, which immediately poses the danger of being replaced by some other algorithm $VS.Checker'$. Therefore, $VS.Checker$ must be designed so that even if it is replaced by some other algorithm, the verifier can still figure out that the developer doesn't run $VS.Checker'$ or $VS.Encode'$ from its replies.

In the new extended definition, $VS.Encrypt$, $VS.Encode$, and $VS.Eval$ remain the same. By running $VS.Eval$ with a publicly known $Certificate$, any third party can check whether the developer is malicious and whether the verification was done correctly. 

\begin{defin}[Extension of Definition \ref{defe}]    \label{generalvs2}
A {\em verification scheme} $VS$ is a tuple of p.p.t. algorithms $(VS.Encrypt, VS.Encode, VS.Checker, VS.Eval)$ such that
\begin{itemize}
\item $VS.Encrypt(1^K, G)$ is a p.p.t. algorithm that takes a security parameter $1^K$ and a table graph $G$ as input, 
and outputs an encrypted table graph $G'$. 
\item $VS.Encode$ is a p.p.t. algorithm that takes an input $x$ and returns an encoding $Enc_{x}$.	
\item $VS.Eval$ is a p.p.t. algorithm that takes a security parameter $K$ and a public $Certificate$ as input, has an honest verifier $V$ satisfying Definition \ref{verfier} hardcoded in it, and outputs $1$ if the verification has been done correctly (and $0$ otherwise).  
\item $VS.Checker$ is a p.p.t. algorithm with a memory $state_C$ that receives a question $Q$ from the verifier and replies with an answer $A$, so that the verifier can detect whether the developer indeed runs $VS.Encode$ and $VS.Checker$.
\end{itemize}
\end{defin}

Figure \ref{protocol3} shows what should be a round of the protocol between the developer and the verifier.
\begin{figure}[h!]
\begin{center}
\includegraphics[width=\textwidth]{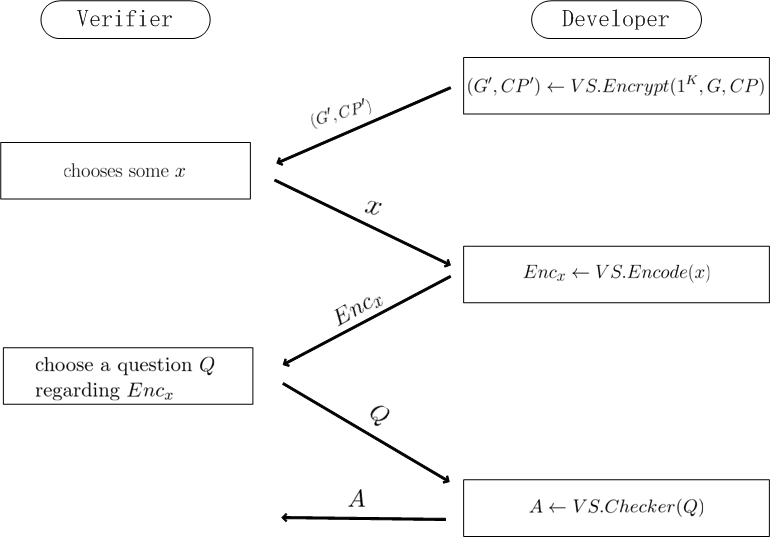}
\end{center}
\caption{The generic $VS$ protocol in Definition \ref{generalvs2}.}
\label{protocol3}
\end{figure}

\begin{defin}[Correctness]   \label{correctness4}
A verification scheme is {\em correct} iff the following holds: 
$VS.Eval(1^K,Certificate)=1$ if and only if all of the following hold:
\begin{eqnarray}
Pr_r[V(1^K,G',G^{spec},VGA_r,CP)(r) = V'(1^K,G',G^{spec},VGA_r,CP))(r)] \geq 1-negl(K), & \label{eqcorrect1}  \\
\forall x_1,x_2:\ Pr[VS.Encode(x_1)=VS.Encode'(x_1),VS.Checker(x_2)=VS.Checker'(x_2)]\geq 1-negl(K), & \label{eqcorrect2} \\
Pr_r[V(1^K,G,G^{spec},VGA_r,CP)(r) = V(1^K,G',G^{spec},VGA_r,CP)(r)] \geq 1-negl(K), & \label{eqcorrect3} 
\end{eqnarray}
where $V$ is the verifier hardwired in $VS.Eval$, and $G'$ is the table graph produced by $VS.Encrypt$.
\end{defin}

\begin{defin}[Security]   \label{security4}
For $A=(A_1,A_2)$ and $S=(S_1,S_2,S_3)$ which are tuples of p.p.t algorithms, consider the two experiments in Figure \ref{nexprl}.
\begin{figure}[h!]
\centering
  \subfloat[Experiment $Exp^{real}$]{%
    {\begin{minipage}[t]{0.6\textwidth}
    \centering
    \begin{tabular}{|l|}
    \hline
    $Exp^{real}(1^K)$  \\ \hline
    1. $(G,CP,state_A) \leftarrow A_1(1^K)$  \\
    2. $G' \leftarrow VS.Encrypt(1^K,G)$  \\
    3. $a \leftarrow A_2^{VS.Encode,VS.Checker }(1^K,G',G,CP,state_A)$  \\
    4. Output $a$  \\ \hline
    \end{tabular}
    \end{minipage}}} \\
  \subfloat[Experiment $Exp^{ideal}$]{%
    {\begin{minipage}[t]{0.6\textwidth}
    \centering
    \begin{tabular}{|l|}
    \hline
    $Exp^{ideal}(1^K)$ \\ \hline
    1. $(G,CP,state_A) \leftarrow A_1(1^K)$  \\
    2. $\tilde{G} \leftarrow S_1(1^K)$  \\
    3. $a \leftarrow A_2^{S^{O_1}_2,S^{O_2}_3}(1^K,\tilde{G},G,CP,state_A)$   \\
    4. Output $a$  \\ \hline
    \end{tabular}
    \end{minipage}}}  
    \caption{The updated experiments}  \label{nexprl}
\end{figure}

A verification scheme $VS$ is \emph{secure} if there exist a tuple of p.p.t. simulators $S=(S_1,S_2,S_3)$ and oracles $O_1,O_2$ such that for all pairs of p.p.t. adversaries $A=(A_1,A_2)$, the following is true for any p.p.t. algorithm $D$:
\[ \left|Pr[D(Exp^{ideal}(1^K),1^K)=1]-Pr[D(Exp^{real}(1^K),1^K)=1]\right| \leq negl(K), \]
i.e., the two experiments are computationally indistinguishable.
\end{defin}

Correspondingly, we update Definition \ref{defenc}:
\begin{defin}[Secure and trusted verification scheme]   \label{ndefenc} 
A verification scheme of Definition \ref{generalvs2} is {\em secure and trusted} iff it satisfies Definitions \ref{correctness4} and \ref{security4}.
\end{defin}

The updated Definition \ref{honest} becomes
\begin{defin}   \label{malicious}
A developer is {\em honest} iff it always runs $VS.Encode$ and $VS.Checker$. Otherwise it is called {\em malicious}.
\end{defin}

\begin{rem}  \label{remnotoracle2}
$VS.Encode$, $VS.Checker$, $S^{O_1}_2$ and $S^{O_2}_3$ in Step 3 of the two experiments in Figure \ref{nexprl} are {\em not} oracles used by $A_2$. In these experiments, $A_2$ plays the role of a (potentially malicious) verifier and interacts with the developer as shown in Figure \ref{protocol3}. More specifically, in $Exp^{real}$, $A_2$ asks the developer to run $VS.Encode$ or $VS.Checker$ on inputs of its choice (instead of querying an oracle), and receives the answer. In $Exp^{ideal}$, $A_2$ again asks the developer, but, unlike $Exp^{real}$, the latter runs $S^{O_1}_2$ instead of $VS.Encode$ and $S^{O_2}_3$ instead of $VS.Checker$, and provides $A_2$ with the answer. Hence, whenever we say that $A_2$ queries $VS.Encode$, $VS.Checker$, $S^{O_1}_2$ or $S^{O_2}_3$ we mean that $A_2$ asks the developer to run $VS.Encode$, $VS.Checker$, $S^{O_1}_2$ or $S^{O_2}_3$ respectively, and provide the answer. Note that $O_1, O_2$ {\em are} oracles for $S_2, S_3$ respectively. 
\end{rem}
We repeat that in Section \ref{sec:honest} we required that the developer satisfied Definition \ref{security3}, but that developer may not comply with Definition \ref{malicious} in this section.

\subsection{Construction outline}   \label{outlinevs}

$VS.Encrypt$, $VS.Encode$ and $VS.Path$ are exactly the same as in Section \ref{sec:honest}. 

\subsubsection{VS.Checker}    \label{sec:checker}

In order to illustrate how $VS.Checker$ is going to be used, we use Figure \ref{VSExp} as an example of the evaluation of a table graph $G$ (on the left) and its encrypted version $G'$ (on the right).  
 \begin{figure}[h!]
 \begin{center}\includegraphics[width=\textwidth]{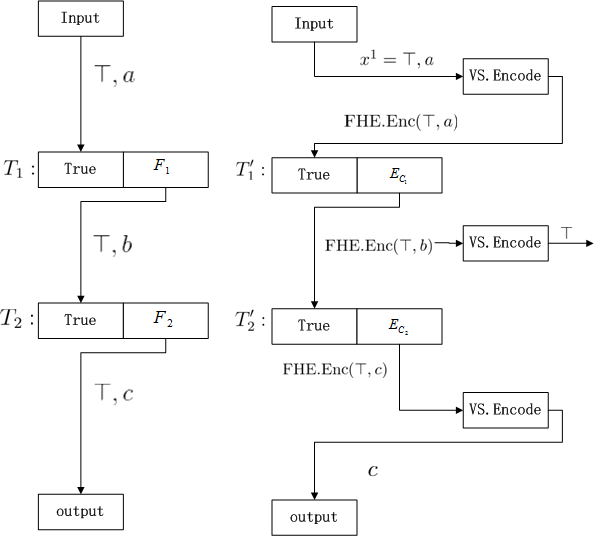}
 \end{center}
 \caption{Example of evaluation of table graph $G$ (left) and its encryption $G'$ (right)}
 \label{VSExp}
 \end{figure}
There are three potential points where a developer can tinker with $VS.Encode$, namely, when the verifier $V$ is querying for an external output (bottom application of $VS.Encode$ in Figure \ref{VSExp}), for an intermediate output (middle application of $VS.Encode$ in Figure \ref{VSExp}), and for an external input (top application of $VS.Encode$ in Figure \ref{VSExp}). 

{\bf Case 1: External output.} We start with this case, since it is somewhat more straight-forward; so, for now, we will assume that the developer has indeed run $VS.Encode$ in the previous tables of the path. Suppose $V$ asks the developer to run $VS.Encode$ to decrypt the external output $FHE.Enc(\top,c)$ of table $T'_2$ in Figure \ref{VSExp}. The correct output of $VS.Encode$ should be the external output $c$, but a malicious developer can replace $VS.Encode$ with a $VS.Encode'$ which outputs $c'\not= c$. $V$ can use the following method to detect this behaviour:
\begin{description}
\item[Step 1] Ahead of running the protocol, $V$ announces the use of a deterministic private key encryption scheme $SE$ (see Definition \ref{symmetrickey}), and chooses a secret key $sk$. ($SE$'s encryption algorithm is $SE.Enc$, represented in a circuit format compatible with $FHE.Eval$.)
\item[Step 2] $V$ first extracts the second component of $FHE.Enc(\top,c)$, which is $FHE.Enc(c)$, and then runs $FHE.Eval$$(hpk$,$SE.Enc$,$FHE.Enc(sk)$,$FHE.Enc(c))$ to get a result $y$.
\item[Step 3] $V$ sends $y$ to the developer and asks it to run $VS.Checker$ in order to fully homomorphically decrypt $y$; the developer returns $VS.Checker$'s output $FHE.Dec(y)$, which we denote as $d$.
\item[Step 4] $V$ uses $SE$'s decryption algorithm $SE.Dec$ to decrypt $d$ and get $SE.Dec(sk,d)$.
\item[Step 5] $V$ compares $SE.Dec(sk,d)$ with the (known) external output $c$. If they are the same, $V$ knows that the developer indeed run $VS.Encode$, otherwise $V$ knows that the developer is malicious.
\end{description}
Obviously, if the developer decides to run $VS.Checker(y)$, i.e., $VS.Checker(FHE.Enc(SE.Enc(sk,c)))$, it gets $d=FHE.Dec(y)=SE.Enc(sk,c)$ which it returns to $V$. Then $V$ obtains $c$ by evaluating $SE.Dec(sk,d)$, and compares it with the answer from the developer who is expected to run $VS.Encode$. If they are not the same, $V$ rejects.
Now suppose that the developer runs some $VS.Checker'$ instead of $VS.Checker$, sending $V$ a value $d'\not= d$, and some $VS.Encode'$, sending $V$ a value $c'\not= c$. Then $V$ uses its secret key $sk$ to decrypt $d'$ and gets $SE.Dec(sk,d')$, which is not $c'$ whp, leading to $V$ rejecting whp.

{\bf Case 2: Intermediate output.} Again, assuming (for now) that the developer has indeed run $VS.Encode$ in the previous tables of the path in Figure \ref{VSExp}, this is the case of $V$ getting the intermediate output $FHE.Enc(\top,b)$. What $V$ is allowed to know from the developer is whether this intermediate output is meaningful or not, namely, whether $FHE.Enc(\top,b)$ contains $\top$ or $\bot$ (cf. Section \ref{sec:construct}). A malicious developer can run $VS.Encode'$ instead of $VS.Encode$ and return $\bot$ instead of $\top$. The method for $VS.Checker$ described in Case 1 can also be used here by changing Step 2, so that $V$ first extracts the first half of $FHE.Enc(\top,c)$ (which is $FHE.Enc(\top)$), and then runs $FHE.Eval(hpk,SE.Enc,FHE.Enc(sk),FHE.Enc(\top))$ to get $y$.

{\bf Case 3: External input.} This is the case of an external input $(\top,a)$ (see top of Figure \ref{VSExp}). The verifier can treat this case exactly like Case 1, to confirm that $FHE.Enc(\top,a)$ is actually a fully homomorphic encryption of $(\top,a)$. 

By doing a consistent traversal of $G_{struc}$, and applying the relevant Case 1-3 each time (i.e., starting with external inputs (Case 3) and working its way through finally the external outputs (Case 1)), $V$ can enforce the developer to run $VS.Encode$ (and $VS.Checker$) in all steps. Unfortunately, allowing $V$ to ask the developer to run $VS.Checker$ actually makes $V$ far more powerful, and there is a risk that $V$ may abuse this power, by sending queries with malicious content to $VS.Checker$. Therefore we need to force $V$ into asking the `right' queries. For example, in Step 3 of Case 1, $VS.Checker$ must check whether $y=FHE.Enc(SE.Enc(sk,c))$ before it decrypts $y$. Our solution to this problem is to use a bit commitment protocol during the interaction between the verifier $V$ and the developer. A description of the interaction between developer and verifier during the execution of such a protocol can be found in Appendix \ref{app:bcp}. Algorithm \ref{vscheck} implements $VS.Checker$. The definition of $QA_E$ is exactly the same as in Section \ref{sec:construct}.

\begin{algorithm}[h!]
\caption{$VS.Checker(i,p,y)$}
\begin{algorithmic}[1]
 \If{$|y| \neq l\cdot m$ or $|p| \neq l\cdot m$}    \label{linevscheckererror1}
 \State  \Return null
 \ElsIf{$\nexists (Q_k,A_k) \in QA_E$ such that $Q_k=(i,x^i,T'_i(x^i),q_2)$ and $p = T'_i(x^i)$} 
 \State \Return null
 \ElsIf{$\nexists (Qe_k,Ae_k) \in QA_E$ such that $(Qe_k,Ae_k) = ((i,u^i_j,null,q_1),w^i_j)$ and $p = w^i_j$}\label{linevscheckererror2}
 \State \Return null
 \Else
 \ForAll {$j \in \{0,...,m-1\}$}   \label{linevschecksedec1}
 \State $b_{j+1} \leftarrow FHE.Dec(hsk,y[j\cdot l+1:(j+1)\cdot l])$   \label{linevschecksedec2}
 \EndFor
 \State The developer starts the bit commitment protocol described in Section \ref{bitcommit1}. The developer wants to commit the verifier to $d=b_1,...,b_m$.  
 \If{(bit commitment protocol failed)}   \label{linevscheckererror3}
  \ \Return null
  \EndIf
  \If{$\exists (Qe_k,Ae_k) \in QA_E$ such that $Qe_k=(i,x^i,T'_i(x^i),q_2)$ and $p = T'_i(x^i)$ and $T'_i$'s output is an intermediate output}    \label{linevscheckererror4}
   \If {$FHE.Eval(hpk,SE.Enc,FHE.Enc(sk),p[1:(m/2\cdot l)]) \neq y$}
   \State \Return $null$
  \EndIf
  \ElsIf {$\exists (Qe_k,Ae_k) \in QA_E$ such that $Qe_k=(i,u^i_j,null,q_1)$ and $p=Ae_k$ and $T'_i$'s input is an external input}
    \If {$FHE.Eval(hpk,SE.Enc,FHE.Enc(sk),p) \neq y$}
     \State \Return $null$
    \EndIf
   \Else  
   \If {$FHE.Eval(hpk,SE.Enc,FHE.Enc(sk),p[(l\cdot m/2+1):(l\cdot m)]) \neq y$}
        \State \Return $null$
       \EndIf
    \EndIf
 \EndIf      \label{linevscheckererror5}
 \State \Return $d$   \label{linevscheck3}
\end{algorithmic}   \label{vscheck}
\end{algorithm}

\subsubsection{VS.Eval}

$VS.Eval$ is an extension of $VS.Eval$ in Section \ref{sec:vseval}. It not only needs to check whether the verifier $V$ evaluates the table graph correctly, but it also needs to check whether the developer replies honestly. In order to check whether the verifier $V$ evaluates the table graph correctly, $VS.Eval$ runs Algorithm \ref{eval}. In addition, by reading the log file which records the interaction between the verifier and the developer, $VS.Eval$ can check whether the verifier actually asks the developer to run $VS.Checker$ for every answer it gets (allegedly) from $VS.Encode$. Actually, $VS.Eval$ can recreate the whole interaction between the verifier and the developer from this log file.

The log file is $QA_E$ (as before) but extended with an extra set $QA_C$ which contains tuples $(Q_i,A_i,S_i)$; each such tuple contains a query $Q_i=(i,p,y)$ by $V$ to $VS.Checker$, the information $S_i$ generated by both the developer and the verifier during the bit commitment protocol, and the answer $A_i$ returned by the developer after running $VS.Checker$. Hence, for each $VS.Encode$ record $(Qe_i,Ae_i) \in QA_E$ generated, a $VS.Checker$ record $(Qc_i,Ac_i,Sc_i)\in QA_C$ will also be generated.
 
Algorithm \ref{VSeval} implements $VS.Eval$, taking logs $QA_E,QA_C$ as its input.

\begin{algorithm}[h!]
  \caption{$VS.Eval(1^K,QA_E,QA_C)$}
  \begin{algorithmic}[1]
  \If {Algorithm \ref{eval} returns 0}    \label{linevsevalimply}
  \Comment Algorithm \ref{eval} is the old $VS.Eval$ (Section \ref{sec:vseval})
  \State \Return 0
  \EndIf
  \ForAll{ $(Qe_i,Ae_i) \in QA_E$}
    \If{$\nexists (Qc_i,Ac_i,Sci) \in QA_C$ corresponding to $(Qe_i,Ae_i)$} 
    \ \Return 0
  \Else
  \ find the corresponding $((i,p,y),d,Sc_i) \in QA_C$   
  \EndIf
  \State Use hardwired honest verifier $V$ and information $Sc_i$ to replay the bit commitment protocol, and check whether $d$ produced by $VS.Checker$ is the value $d'$ the original verifier committed to.
  \If{$d'\not= d$ }
  \Return 0
  \EndIf
  \If{$Qe_i$'s format is $(i,v^i_j,null,q_1)$}
  \If{ $ FHE.Eval(hpk,SE.Enc,FHE.Enc(sk),p) \neq y$} \Return 0
  \ElsIf{$SE.Dec(sk,d) \neq v^i_j$} \Return 0
  \Else \ \Return 1
  \EndIf
  \ElsIf{ $Qe_i$'s format is $(i,x^i,T'_i(x^i),q_2)$ and $T'_i(x^i)$ is an external output}
  \If{ $FHE.Eval(hpk,SE.Enc,FHE.Enc(sk),p[(l\cdot m/2+1):(l\cdot m)]) \neq y$} \Return 0
  \ElsIf {$SE.Dec(sk,d) \neq Ae_i$} \Return 0
  \Else \ \Return 1
  \EndIf
  \ElsIf{ $Qe_i$'s format is $(i,x^i,T'_i(x^i),q_2)$ and $T'_i(x^i)$ is an intermediate output}
  \If{ $FHE.Eval(hpk,SE.Enc,FHE.Enc(sk),p[1:(l\cdot m/2)]) \neq y$} \Return 0
  \ElsIf {$SE.Dec(sk,d) \neq Ae_i$} \Return 0    \label{linevsevaldecrypt}
  \Else \ \Return 1
  \EndIf
  \EndIf
  \EndFor
\end{algorithmic}   \label{VSeval}
\end{algorithm}

\subsection{Correctness and security}
   
In our implementation, the $Cerificate$ used by $VS.Eval$ in Definition \ref{generalvs2} is $(QA_E,QA_C,G',hpk,U)$.
In Definition \ref{security4}, simulators $S_1$, $S_2$ and $S_3$ simulate $VS.Encrypt$, $VS.Encode$ and $VS.Checker$ respectively. Similarly to Section \ref{sectionproof3}, we add one more simulator $S_4$ to simulate $VS.Path$, and its corresponding oracle (Algorithm \ref{oracle2}). Algorithm \ref{oracle3} describes the oracle $O_3$ used by $S_4$ to simulate $VS.Checker$.
   
\begin{algorithm}[h!]
\caption{$O_3(i,p,y)$}
\begin{algorithmic}[1]
   \If{$|y| \neq l\cdot m$ or $|p| \neq l\cdot m$}  \label{lineoracle3error1}
   \State \Return $null$
   \ElsIf{$\nexists (Qe_k,Ae_k) \in QA_E$ such that $Qe_k=(i,x^i,T'_i(x^i),q_2)$ and $p = T'_i(x^i)$}
   \State  \Return null
   \ElsIf{$\nexists (Qe_k,Ae_k) \in QA_E$ such that $(Qe_k,Ae_k) = ((i,u^i_j,null,q_1),w^i_j)$ and $p = w^i_j$}  \label{lineoracle3error2}
   \State \Return $null$
   \Else
   \State Find $(Qe_k,Ae_k)\in QA_E$ that matches the input $(i,p,y)$.   \label{lineorace31}
   \If {$Qe_k$ is $(i,u^i_j,null,q_1)$}
   \State $a \leftarrow u^i_j$.
   \Else  \ $a \leftarrow Ae_k$.
   \EndIf
   \State $b_1b_2...b_{m} \leftarrow SE.Enc(sk,a)$     \label{lineoracle32}  
   \Comment Oracle  $O_3$ knows secret key $sk$ and $SE.Enc$ used by $A$ 
    \State The developer starts the bit commitment protocol described in Section \ref{bitcommit1}. The developer wants to commit the verifier to $d=b_1,...,b_m$.  
 \If{(bit commitment protocol failed)}   \label{lineoracle3error3}
  \ \Return null
  \EndIf
  \If{$\exists (Qe_k,Ae_k) \in QA_E$ such that $Qe_t=(i,x^i,T'_i(x^i),q_2)$ and $p = T'_i(x^i)$ and $T'_i$'s output is an intermediate output}   \label{lineoracle3error4}
      \If {$FHE.Eval(hpk,SE.Enc,FHE.Enc(sk),p[1:(m/2\cdot l)]) \neq y$}
          \State \Return $null$
      \EndIf
      \ElsIf {$\exists (Qe_k,Ae_k) \in QA_E$ such that $Qe_k=(i,u^i_j,null,q_1)$ and $Ae_k = p$ and $T'_i$'s input is an external input}
         \If {$FHE.Eval(hpk,SE.Enc,FHE.Enc(sk),p) \neq y$}
         \State \Return $null$
         \EndIf
       \Else  
          \If {$FHE.Eval(hpk,SE.Enc,FHE.Enc(sk),p[(l\cdot m/2+1):(l\cdot m)]) \neq y$}
          \State \Return $null$
          \EndIf
        \EndIf     \label{lineoracle3error5}
  \EndIf     
  \State \Return $d$   \label{lineoracle35}
\end{algorithmic}    \label{oracle3}
\end{algorithm}

\subsubsection{Correctness}
    
We show the following
\begin{theorem}\label{thmcorrect}
The verification scheme $VS$ introduced in this section satisfies Definition \ref{correctness4}.
\end{theorem}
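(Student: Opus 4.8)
The plan is to establish the biconditional of Definition~\ref{correctness4} by treating its three conditions separately, reusing the machinery of Theorem~\ref{thm3} for \eqref{eqcorrect1} and \eqref{eqcorrect3}, and supplying a fresh argument for \eqref{eqcorrect2}. The key observation is that $VS.Eval$ (Algorithm~\ref{VSeval}) begins by invoking the old $VS.Eval$ of Section~\ref{sec:vseval} (Algorithm~\ref{eval}); that subroutine recomputes every table output by itself via $FHE.Eval(hpk,U,E_{C_r},u)$ and then runs the hardwired honest verifier $V'$, so it is insensitive to a malicious developer. Hence, exactly as in the proof of Theorem~\ref{thm3} (through Lemmas~\ref{induction} and~\ref{linput}), this first stage outputs $1$ iff condition \eqref{eqcorrect1} (which coincides with \eqref{eqcorrectness2}) and condition \eqref{eqcorrect3} (which coincides with \eqref{eqcorrectness1}) both hold. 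It therefore remains to show that the additional loop of Algorithm~\ref{VSeval} over the records of $QA_E$ and $QA_C$ outputs $1$ rather than $0$ iff condition \eqref{eqcorrect2} holds, i.e., iff the developer ran $VS.Encode$ and $VS.Checker$ honestly on the queries actually posed during the verification.

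For the \textbf{``if'' direction} I would assume the developer is honest ($VS.Encode'=VS.Encode$ and $VS.Checker'=VS.Checker$). Then each $(Qe_i,Ae_i)\in QA_E$ has a matching $(Qc_i,Ac_i,Sc_i)\in QA_C$; replaying the CMB protocol with the hardwired honest $V$ recovers the value $d$ that $VS.Checker$ committed to, so the test $d'=d$ passes by the completeness half of Definition~\ref{bitcommit2}. Moreover $d=FHE.Dec(hsk,y)$ where $y=FHE.Eval(hpk,SE.Enc,FHE.Enc(sk),p[\cdot])$ for the relevant slice of $p$; by the $C$-homomorphic property of $FHE$ (Definition~\ref{homomorphism}, with $C=SE.Enc$) we get $d=SE.Enc(sk,a)$ for the corresponding plaintext $a$, and since $SE$ is deterministic, $SE.Dec(sk,d)=a$, which equals $Ae_i$ (or $v^i_j$) in each of the three branches. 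Hence every branch passes and, together with the first stage, $VS.Eval$ returns $1$.

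For the \textbf{``only if'' direction} I would argue the contrapositive: if \eqref{eqcorrect2} fails then on some query actually posed the developer's reply differs from the honest one with non-negligible probability, and I would do a case analysis on the first such query. If $VS.Encode'$ returns a wrong answer $Ae_i$, the recorded slice $p$ is still the genuine FHE ciphertext (it is re-derived and re-checked in the first stage), so the well-formedness test $FHE.Eval(hpk,SE.Enc,FHE.Enc(sk),p[\cdot])=y$ either fails outright or forces $y$ to be the honest query, whose committed decryption is $SE.Enc(sk,a)$ for the true value $a\neq Ae_i$; in the latter case the binding half of Definition~\ref{bitcommit2} prevents the developer from opening a different $d$ except with negligible probability, so the replay recovers $d$ with $SE.Dec(sk,d)=a\neq Ae_i$ and $VS.Eval$ returns $0$. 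If instead $VS.Encode$ is honest but $VS.Checker'$ deviates, then either no valid $QA_C$ record exists (immediate $0$), or the committed value is again pinned by binding and one of the checks $SE.Dec(sk,d)\neq Ae_i$, $FHE.Eval(\cdot)\neq y$ fires whp. The single-symbol ``$\top$ vs.\ $\bot$'' case is handled identically: flipping the symbol changes the \emph{known} plaintext, hence changes $SE.Enc(sk,\cdot)$, and is caught whp since only $V$ (and $VS.Eval$) know $sk$. Collecting the cases, $VS.Eval=1$ implies \eqref{eqcorrect2} whp, and combined with the first stage this gives the full equivalence.

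The main obstacle I anticipate is the soundness step: a malicious developer may deviate \emph{simultaneously} in $VS.Encode'$ and $VS.Checker'$ and additionally control its own messages inside the bit-commitment protocol, so one must verify that the cross-linking recorded in $(QA_E,QA_C)$ --- each $VS.Encode$ answer tied to a commitment whose opening is independently checkable against the symmetric key $sk$ --- leaves no consistent way to cheat. The argument hinges on three facts fitting together: determinism of $FHE.Eval$ and of $SE.Enc/SE.Dec$ (which makes the well-formedness tests exact), the binding guarantee of the CMB protocol (which forbids equivocation on $d$), and $V$'s sole knowledge of $sk$ (so the developer cannot predict which opening will pass). A secondary point worth spelling out is the gap between ``honest on the queried inputs'' and the universally quantified \eqref{eqcorrect2}: as elsewhere in the paper, the relevant $x_1,x_2$ are precisely those determined by the publicly recorded $VGA$, random seed $s$, critical points $CP$, and structure graph $G_{struc}$, so the finite set of queries examined by $VS.Eval$ is exactly the one that matters.
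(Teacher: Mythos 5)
Your proposal is correct and follows essentially the same route as the paper: the same decomposition into (i) delegating conditions \eqref{eqcorrect1} and \eqref{eqcorrect3} to the first stage of Algorithm~\ref{VSeval}, which just reruns the honest-developer $VS.Eval$ of Section~\ref{sec:vseval} and hence inherits Theorem~\ref{thm3}, and (ii) a dedicated argument for \eqref{eqcorrect2} built on the identity $SE.Dec(sk,d)=SE.Dec\bigl(sk,FHE.Dec(hsk,y)\bigr)=FHE.Dec\bigl(hsk,T'_k(x^k)[\cdot]\bigr)=Ae^*_i$, using the determinism of $FHE.Eval$ and $SE$, the $C$-homomorphic property, the binding of the CMB commitment, and the developer's ignorance of $sk$. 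The paper organizes the \eqref{eqcorrect2} argument as three cases (only $VS.Encode'$ honest, only $VS.Checker'$ honest, both deviating, the last yielding a contradiction with line~\ref{linevsevaldecrypt} of Algorithm~\ref{VSeval}), whereas you argue the contrapositive on the first deviating query; these are interchangeable. The one substantive difference is that you also prove the completeness direction of the biconditional (honest developer and honest verifier imply $VS.Eval=1$), which the paper's proof omits entirely --- it only shows that $VS.Eval=1$ implies \eqref{eqcorrect1}--\eqref{eqcorrect3} --- so on this point your write-up is actually more faithful to Definition~\ref{correctness4} than the paper's. Both you and the paper lean on an unproved ``whp'' step, namely that a developer who does not know $sk$ cannot commit to a $d$ with $SE.Dec(sk,d)$ equal to a chosen fake $Ae_i$; single-message indistinguishability of $SE$ does not literally give this unforgeability, and you at least flag the reliance explicitly.
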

\begin{proof}
We prove that if $VS.Eval(1^k,QA_E,QA_C)=1,$ then inequalities \eqref{eqcorrect1}-\eqref{eqcorrect3} hold.

\begin{lemma}  \label{lem:cond1}
If $VS.Eval(1^k,QA_E,QA_C)=1$, then \eqref{eqcorrect2} holds. 
\end{lemma}
\begin{proof}
Suppose that $Qc_i=(k,p,y)$ and $Ac_i=d$. For brevity reasons, we will only consider the case of $Qe_i=(k,x^k,T'_k(x^k),q_2)$, and $T'_k(x^k)$ is an external output. The other cases ($Qe_i=(k,x^k_j,null,q_1)$ or   $T'_k(x^k)$ is not an external output), are similar. Also, since the bit commitment protocol succeeds whp, $V$ can be assumed to be honest.  

First we consider the case of $VS.Encode$ and $VS.Encode'$ having the same input-output functionality.
Let $(Qc_i,Ac_i,Sc_i) \in QA_C$ be the tuple that corresponds to the current $VS.Encode$ query $(Qe_i,Ae_i)$.
Suppose that $VS.Checker'$ outputs a value $d$, while $VS.Checker$ would output $d^*$. 
If $VS.Eval(1^k,QA_E,QA_C)=1$, then we know (see line \ref{linevsevaldecrypt} in Algorithm \ref{VSeval})
\begin{equation}\label{eqsedec1}
SE.Dec(sk,d)=Ae_i
\end{equation}  
Therefore, to prove that $d=d^*$, it is enough to prove that
\begin{equation}   \label{eqsedecae}
SE.Dec(sk,d^*)=Ae_i.
\end{equation}
According to $VS.Checker$'s construction (see lines \ref{linevschecksedec1}-\ref{linevschecksedec2} in Algorithm \ref{vscheck}), 
\begin{equation}\label{eqfhedecy}
d^*=FHE.Dec(hsk,y),
\end{equation} 
where $Qc_i=(i,p,y)$. 
Since $VS.Encode$ and $VS.Encode'$ have the same input-output functionality, given $Qe_i=(k,x^k,T'_k(x^k),q_2)$ as input to both $VS.Encode$ and $VS.Encode'$, their outputs are the same, i.e., $Ae^*_i=Ae_i$, where $Ae^*_i$ is the output of $VS.Encode$ and $Ae_i$ is the output of $VS.Encode'$. According to the construction of $VS.Encode$ (see lines \ref{lineencode1}-\ref{lineencode2} in Algorithm \ref{encode}) ,
\begin{equation}   \label{eqfhedecae*}
Ae^*_i=FHE.Dec(hsk,T'_k(x^k)[lm/2+1:lm])
\end{equation}
Therefore, 
\begin{equation}    \label{eqfhedecae}
Ae_i=FHE.Dec(hsk,T'_k(x^k)[lm/2+1:lm]).
\end{equation} 
On the other hand, whp Definition \ref{homomorphism} implies that
\begin{align}
y=& FHE.Eval(hpk,SE.Enc,FHE.Enc(sk),T'_k(x^k)[lm/2+1:lm])  \nonumber\\
  =& FHE.Enc(hpk,SE.Enc(sk,FHE.Dec(hsk,T'_k(x^k)[lm/2+1:lm]))    \label{eqfheevaly}
\end{align}
Hence, by combining \eqref{eqfhedecy} and \eqref{eqfheevaly} we get
\[ d^*=SE.Enc(sk,FHE.Dec(hsk,T'_k(x^k)[lm/2+1:lm]))), \]
which implies that
\begin{align}
SE.Dec(sk,d^*)=& SE.Dec(sk, SE.Enc(sk,FHE.Dec(hsk,T'_k(x^k)[lm/2+1:lm]))))\nonumber\\
    =& FHE.Dec(hsk,T'_k(x^k)[lm/2+1:lm])   \label{eqfhesedecd}
\end{align}
Then, by combining \eqref{eqfhedecae} and \eqref{eqfhesedecd}, \eqref{eqsedecae} holds whp.   

Next we consider the case of $VS.Checker$ and $VS.Checker'$ having the same input-output functionality. 
For $(Qe_i,Ae_i) \in QA_E$, and the corresponding pair $(Qc_i,Ac_i)\in QA_C$, $Qe_i=(k,x^k,T'_k(x^k),q_2)$ is the input of $VS.Encode'$ and $Ae_i$ its output, while $Qc_i=(i,p,y)$ is the input of $VS.Checker'$ and $Ac_i=d$ its output. Since $VS.Eval(1^k,QA_E,QA_C)=1$, we know that \eqref{eqsedec1} holds (see line \ref{linevsevaldecrypt} in Algorithm \ref{VSeval}). Also \eqref{eqfhedecae*} is obviously true, as is \eqref{eqfhedecy} (see lines \ref{linevschecksedec1}-\ref{linevschecksedec2} in Algorithm \ref{vscheck}). The latter, together with the identical functionality of $VS.Checker, VS.Checker'$, implies that
\begin{equation}    \label{eqfhedecd2}
d=FHE.Dec(hsk,y)
\end{equation}
Hence, by combining \eqref{eqsedec1},\eqref{eqfhedecd2},\eqref{eqfheevaly} we have 
\begin{align*}
Ae_i=& SE.Dec(sk,d) \\
       =& SE.Dec(sk,FHE.Dec(hsk,y)) \\
       =& SE.Dec(sk,SE.Enc(sk,FHE.Dec(hsk,T'_k(x^k)[lm/2+1:lm])) \\
       =& FHE.Dec(hsk,T'_k(x^k)[lm/2+1:lm])
\end{align*}
Hence, \eqref{eqfhedecae} holds, and combined with \eqref{eqfhedecae} and \eqref{eqfhedecae*}, we get $Ae_i=Ae^*_i$, i.e., given $Qe_i$ as input to $VS.Encode$ and $VS.Encode'$, their outputs are the same whp. 
     
Finally suppose that there exists $(Qe_i,Ae_i)\in QA_E$ and corresponding $(Qc_i,Ac_i,Sc_i)\in QA_C$ such that $VS.Checker(Qc_i) \neq VS.Checker'(Qc_i)$ and $VS.Encode(Qe_i) \neq VS.Encode'(Qe_i)$. We know (see lines \ref{lineencode1}-\ref{lineencode2} in Algorithm \ref{encode}) that \eqref{eqfhedecae*} holds, and, by combining \eqref{eqfhedecy}, \eqref{eqfhedecae*} and \eqref{eqfheevaly}, we get
\begin{align*}
   SE.Dec(sk,d^*)=& SE.Dec(sk,FHE.Dec(hsk,y))  \\
   =& SE.Dec(sk,SE.Enc(sk,FHE.Dec(hsk,T'_k(x^k)[lm/2+1:lm])))  \\
   =& FHE.Dec(hsk,T'_k(x^k)[lm/2+1:lm]) = Ae^*_i \label{eqsedecae*}
\end{align*}
Since $d^* \neq d$ and $Ae^*_i \neq Ae_i$, $d$ and $Ae_i$ do not satisfy \eqref{eqsedec1} whp. But according to $VS.Eval$'s construction (see line \ref{linevsevaldecrypt} in Algorithm \ref{VSeval}), when $VS.Eval(1^k,QA_E,QA_C)=1$, \eqref{eqsedec1} holds whp, a contradiction. 
\end{proof} 
\begin{lemma}    \label{lem:cond2}
If $VS.Eval(1^k,QA_E,QA_C)=1$, then \eqref{eqcorrect1} and \eqref{eqcorrect3} hold.
\end{lemma}
\begin{proof}
Since according to $VS.Eval$'s construction (see line \ref{linevsevalimply} in Algorithm \ref{VSeval}), Algorithm \ref{VSeval} outputting 1 implies Algorithm \ref{eval} outputting 1, \eqref{eqcorrectness1} and \eqref{eqcorrectness2} must hold, and they continue to hold while Algorithm \ref{VSeval} invokes $VS.Checker$. Together with \eqref{eqcorrect2} (which we have already proven), \eqref{eqcorrect1} and \eqref{eqcorrect3} easily follow.
\end{proof}    
\end{proof}

\subsubsection{Security}  \label{sec:gensecure}
   
Following the methodology of \cite{goldwasser2013reusable}, we will show the following
\begin{theorem}   \label{thmsecurity}
The verification scheme VS introduced in this section satisfies Definition \ref{security4}.
\end{theorem}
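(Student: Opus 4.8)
The plan is to follow the same three-layer argument used for the honest-developer case (Theorem~\ref{thm1}), extended to cover $VS.Checker$ and the bit-commitment sub-protocol. First I would fix the simulator tuple: $S_1$ builds a bogus table graph $\tilde G=[\tilde{TS},G_{struc}]$ on the publicly known structure graph and runs $VS.Encrypt(1^K,\tilde G)$; $S_2$ answers ``Encode'' queries through the oracle $O_1$ of Algorithm~\ref{oracle}; $S_3$ answers ``Checker'' queries through the oracle $O_3$ of Algorithm~\ref{oracle3}; and, exactly as in the proof of Theorem~\ref{thm1}, an auxiliary $S_4$ answers ``Path'' queries through $O_2$ of Algorithm~\ref{oracle2}. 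Thus $(S_2,S_3,S_4)$ together play the role of $S$ and $(O_1,O_2,O_3)$ the role of the oracle $O$ of Definition~\ref{security4}; write $O'_1,O'_2,O'_3$ for $S_2^{O_1},S_4^{O_2},S_3^{O_3}$ (recall from Remark~\ref{remnotoracle2} that $A_2$ does not query these directly but asks the developer). I would then introduce an intermediate experiment $Exp$, analogous to Figure~\ref{exp}, identical to $Exp^{real}$ except that the adversary's queries are served by $O'_1,O'_2,O'_3$ instead of by $VS.Encode,VS.Path,VS.Checker$, while the graph handed to $A_2$ is still the genuine $G'=VS.Encrypt(1^K,G)$.

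The first step is to show $Exp^{real}$ and $Exp$ are computationally indistinguishable. For the $VS.Encode$ and $VS.Path$ parts the argument of Lemma~\ref{first} carries over verbatim (they have the same input--output functionality as $O'_1$ and $O'_2$). The new content is proving $VS.Checker$ and $O'_3$ have the same input--output functionality. I would match the guards one by one: the length checks and the two ``record not in $QA_E$'' checks at the top of Algorithm~\ref{vscheck} (lines~\ref{linevscheckererror1}--\ref{linevscheckererror2}) coincide with lines~\ref{lineoracle3error1}--\ref{lineoracle3error2} of Algorithm~\ref{oracle3}, and the three $FHE.Eval(\cdot)\neq y$ guards (lines~\ref{linevscheckererror4}--\ref{linevscheckererror5}) coincide with lines~\ref{lineoracle3error4}--\ref{lineoracle3error5}. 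When no guard fires, $VS.Checker$ runs Naor's CMB protocol (Definition~\ref{bitcommit2}) to commit the verifier to $d=FHE.Dec(hsk,y)$; $O_3$ instead reconstructs $a$ (either $u^i_j$ or $Ae_k$) from the matching $QA_E$ record, sets $d=SE.Enc(sk,a)$ bit by bit, and runs the identical CMB protocol. By $C$-homomorphism of $FHE$ (Definition~\ref{homomorphism}) together with the surviving guard $FHE.Eval(hpk,SE.Enc,FHE.Enc(sk),p[\cdot])=y$ and the invariant \eqref{eqinduction} (so $p=T'_i(x^i)$ is an $FHE$ encryption of the actual value $a$), we get $FHE.Dec(hsk,y)=SE.Enc(sk,a)$ whp, hence the two committed strings agree. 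Finally, the CMB transcript recorded as $S_i$ depends only on $d$ and the honest verifier's coins, so it is distributed identically in the two experiments and leaks nothing extra.

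The second step is the hybrid argument, essentially verbatim from the honest case. I would define $Exp^0,\ldots,Exp^n$ as in Figure~\ref{expi}, but with the oracle set now $O'_1,O'_2,O'_3$, where $Exp^i$ uses real encrypted tables $(True,E_{C_1}),\ldots,(True,E_{C_i})$ and simulated ones $(True,E_{\tilde C_{i+1}}),\ldots,(True,E_{\tilde C_n})$; then $Exp^0=Exp^{ideal}$ and $Exp^n=Exp$. A distinguisher for $Exp^{real}$ versus $Exp^{ideal}$ gives, by the triangle inequality over the $n$ hybrids, a distinguisher for some neighbouring pair $Exp^i,Exp^{i+1}$; embedding the single challenge table $(True,E_{C_{i+1}})$ or $(True,E_{\tilde C_{i+1}})$ into a full graph $H$ and invoking $A_2^{O'_1,O'_2,O'_3}$ --- i.e.\ the reduction $A'=(A'_1,A'_2)$ of Figure~\ref{A'}, now additionally forwarding Checker queries to $O'_3$ --- contradicts the single-table indistinguishability of Lemma~\ref{abe2}, hence the IND-CPA security of $FHE$ (Definition~\ref{defINDCPA}). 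This closes the proof.

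I expect the main obstacle to be the $VS.Checker\equiv O'_3$ equivalence and, within it, the fact that a \emph{malicious} verifier $A_2$ may now feed arbitrary ciphertexts $y$ (and an arbitrary alleged $p$) to $VS.Checker$. The key is that the structural guards of Algorithms~\ref{vscheck} and~\ref{oracle3} reject every $y$ not of the prescribed $FHE.Eval$ form relative to a legitimately recorded $QA_E$ entry, so the only values ever decrypted and committed are honest evaluations; and the hiding property of the commit stage of Naor's CMB protocol keeps the transcript $S_i$ uninformative about $d$, so the $Exp^{real}$ and $Exp$ transcripts stay indistinguishable. A secondary point to address is that $O_3$ is handed the verifier's symmetric key $sk$ and the circuit $SE.Enc$: this is legitimate because Definition~\ref{security4} imposes no efficiency or knowledge restriction on the oracles, and it is exactly what lets $O_3$ output $d=SE.Enc(sk,a)$ without ever seeing the real graph $G$.
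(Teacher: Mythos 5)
Your proposal is correct and uses the same essential ingredients as the paper --- the same simulator tuple (up to relabelling of indices), the same key step of showing that $VS.Checker$ and its oracle (Algorithm~\ref{oracle3}) have identical input--output functionality by matching the guards line by line, and the same ultimate reliance on the single-table indistinguishability of Lemma~\ref{abe2}. The difference is purely in how the reduction is organized. The paper introduces an intermediate experiment $Exp^{extra}$ in which \emph{only} $VS.Checker$ is replaced by $S_4^{O_3}$ while $VS.Encode$ and $VS.Path$ stay real, and then treats the honest-developer result as a black box: it defines $Exp^{rtest}/Exp^{itest}$ (the Theorem~\ref{thm1} experiments augmented with $VS.Path$), and builds a wrapper adversary that internally constructs $S_4^{O_3}$ and $VS.Eval$ and feeds them to $\tilde A_2$, so that a distinguisher for $Exp^{extra}$ vs.\ $Exp^{ideal}$ becomes a distinguisher for $Exp^{rtest}$ vs.\ $Exp^{itest}$, contradicting the already-proven honest case. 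You instead swap all three oracles at once in your intermediate experiment $Exp$ and re-run the table-by-table hybrid $Exp^0,\ldots,Exp^n$ explicitly with the enlarged oracle set. Both routes are sound; the paper's is more modular (it reuses Theorem~\ref{thm1} rather than repeating its hybrid), while yours is more self-contained and forces you to check that $O_3$ remains well-defined in the mixed hybrids, which you implicitly do. One point where your write-up is actually more careful than the paper: you note that the commit stage precedes the $FHE.Eval$ guard in Algorithms~\ref{vscheck} and~\ref{oracle3}, so on inputs that will be rejected the two algorithms may commit to different strings, and you invoke the hiding property of Naor's commitment to argue the transcripts nonetheless remain indistinguishable; the paper's proof passes over this silently.
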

\begin{proof}
We construct a tuple of simulators $(S_1,S_2,S_3,S_4)$ so that $S_1$,$S^{O_1}_2$ and $S^{O_2}_3$ are the same as in the proof of Theorem \ref{thm1}. $S_4$ receives queries from $A_2$, queries oracle $O_3$ (Algorithm \ref{oracle3}), and returns the output of $O_3$ to $A_2$. 

In our proof, we will need to define a new experiment $Exp^{extra}(1^K)$ (cf. Figure \ref{expextra}).
\begin{figure}[h!]
\centering
{\noindent\begin{minipage}[t]{\textwidth}
\begin{center}
\begin{tabular}{|p{0.01\textwidth}p{0.7\textwidth}|}
\hline
   &$Exp^{extra}(1^K)$ \\ \hline
   1. &$(G,CP,state_A) \leftarrow A_1(1^K)$  \\
   2. &$(G',hpk,U) \leftarrow VS.Encrypt(1^K,G)$  \\
   3. &$a \leftarrow A_2^{VS.Encode,VS.Path,\ S^{O_3}_4 }(VS.Eval,G',G,VGA,CP,hpk,state_A)$  \\
   4. &Output $a$ \\ \hline
\end{tabular}
\end{center}
\end{minipage}}
\caption{Experiment $Exp^{extra}$}  \label{expextra}
\end{figure}
$Exp^{extra}(1^K)$ and $Exp^{real}(1^K)$ (cf. Figure \ref{nexprl}) differ only in Step 3, where $A_2$ queries $VS.Checker$ in $Exp^{real}(1^K)$ and $S^{O_3}_4$ in $Exp^{extra}(1^K)$. If we can show that $VS.Checker$ and $S^{O_3}_4$ have the same input-output functionality, then $Exp^{real}(1^K)$ and $Exp^{extra}(1^K)$ are computationally indistinguishable. 

To prove this we distinguish two cases for an input $(i,p,y)$ to $VS.Checker$ and $O_3$. 
 
First, when $VS.Checker(i,p,y)$ outputs $null$, it is easy to see that $O_3(i,p,y)$ will also output $null$; this happens when the size of $p$ or $y$ is not correct (line \ref{linevscheckererror1} in Algorithm \ref{vscheck} and \ref{lineoracle3error2} in Algorithm \ref{oracle3}), when $p$ is not generated by the evaluation of a table or $VS.Encode$ (line \ref{linevscheckererror2} in Algorithm \ref{vscheck} and line \ref{lineoracle3error3} in Algorithm \ref{oracle3}), and when the bit commitment protocol fails (lines \ref{linevscheckererror3}-\ref{linevscheckererror5} in Algorithm \ref{vscheck} and lines \ref{lineoracle3error3}-\ref{lineoracle3error5} in Algorithm \ref{oracle3}).
    
Second, when $VS.Checker(i,p,y)$ does not output $null$, it will output a value $d$;  in this case $O_3(i,p,y)$ first generates the same $d$ (see lines \ref{lineorace31}-\ref{lineoracle32} of Algorithm \ref{oracle3}), and after passing the bit commitment protocol, it also outputs $d$. 

Therefore $VS.Checker$ and $S^{O_3}_4$ have the same input-output functionality. Hence $Exp^{real}(1^K)$ and $Exp^{extra}(1^K)$ are computationally indistinguishable:
\begin{equation}\label{eq:3}
|Pr[D(Exp^{real}(1^K),1^K)=1]-Pr[D(Exp^{extra}(1^K),1^K)=1]| \leq negl(K)
\end{equation} 
   
$Exp^{rtest}(1^K)$ and $Exp^{itest}(1^K)$ in Figure \ref{nexprli} are the experiments $Exp^{real}(1^K)$ and $Exp^{ideal}(1^K)$ in Definition \ref{security3}, with $VS.Path$ added to $Exp^{real}(1^K)$ and $S^{O_2}_3$ added to $Exp^{ideal}(1^K))$ (see Figure \ref{exprl}). 
   
\begin{figure}[h!]
\centering
  \subfloat[Experiment $Exp^{rtest}(1^K)$]{%
    {\noindent\begin{minipage}[t]{0.6\textwidth}
    \begin{center}
    \begin{tabular}{|l|}
    \hline
   $Exp^{rtest}(1^K)$  \\ \hline
   1. $(G,CP,state_A) \leftarrow A_1(1^K)$  \\
   2. $(G',hpk,U) \leftarrow VS.Encrypt(1^K,G)$   \\
   3. $a \leftarrow A_2^{VS.Encode,VS.Path}(G',G,CP,hpk,U,state_A)$  \\
   4. Output $a$   \\ \hline
    \end{tabular}
    \end{center}
    \end{minipage}}} \\
  \subfloat[Experiment $Exp^{itest}(1^K)$]{%
    {\noindent\begin{minipage}[t]{0.48\textwidth}
    \centering
    \begin{tabular}{|l|}
    \hline
   $Exp^{itest}(1^K)$ \\ \hline
   1. $(G,CP,state_A) \leftarrow A_1(1^K)$  \\
   2. $(G'',hpk,U) \leftarrow S_1(1^K, s,d,G_{struc})$   \\
   3. $a \leftarrow A_2^{S^{O_1}_2,S^{O_2}_3}(G'',G,CP,hpk,U,state_A)$   \\
   4. Output $a$  \\ \hline
    \end{tabular}
    \end{minipage}}}   
    \caption{Experiments $Exp^{rtest}$ and $Exp^{itest}$}  \label{nexprli}
\end{figure}
In the same way as in Theorem \ref{thm1}, we can show that $Exp^{rtest}(1^K)$ and $Exp^{itest}(1^K)$ are computationally indistinguishable, i.e., for all pairs of p.p.t. adversaries $A=(A_1,A_2)$ and any p.p.t. algorithm $D$,
\begin{equation}   \label{eq:2}
|Pr[D(Exp^{rtest}(1^K),1^K)=1]-Pr[D(Exp^{itest}(1^K),1^K)=1]| \leq negl(K)
\end{equation}
   
Assume that $Exp^{real}(1^K)$ and $Exp^{ideal}(1^K)$ are computationally distinguishable. Then \eqref{eq:3} implies that $Exp^{extra}(1^K)$ and $Exp^{ideal}(1^K)$ are computationally distinguishable, i.e., there are a p.p.t. algorithm $\tilde{D}$ and a p.p.t. adversary $\tilde{A}=(\tilde{A}_1,\tilde{A}_2)$ such that
\begin{equation}   \label{eq:1}
|Pr[\tilde{D}(Exp^{ideal}(1^K),1^K)=1]-Pr[\tilde{D}(Exp^{extra}(1^K),1^K)=1]|>negl(K)
\end{equation}
   
A p.p.t. adversary $A=(A_1,A_2)$ that wants to distinguish between $Exp^{rtest}(1^K)$ and $Exp^{itest}(1^K)$ can use $\tilde{A}$ and $\tilde{D}$ as follows:
\begin{itemize}
\item $A_1$ runs $\tilde{A}_1$ and outputs $(G,CP,state_{\tilde{A}}) \leftarrow \tilde{A}_1(1^K)$.
\item $A_2$ gets one of the two:
 \begin{itemize}
 \item $(G',hpk,U) \leftarrow VS.Encrypt(1^K,G)$ and has oracle access to $VS.Encode$, or
 \item $(G'',hpk,U) \leftarrow S_1(1^K,s,d ,G_{struc})$ and has oracle access to $S^{O_1}_2,S^{O_2}_3$
 \end{itemize}   
depending on which one of the two experiments ($Exp^{rtest}(1^K)$ and $Exp^{itest}(1^K)$) is executed. Then $A_2$ constructs $S^{O_3}_4$ and $VS.Eval$ (which are also used in $Exp^{ideal}(1^K)$ and $Exp^{extra}(1^K)$), and runs $\tilde{A}_2$ providing it with access to $S^{O_3}_4$. Its output $a$ will be either $\tilde{A}_2^{VS.Encode,VS.Path,S^{O_3}_4 }(VS.Eval,G',G,CP,hpk,U,state_{\tilde{A}})$ or $\tilde{A}_2^{S^{O_1}_2,S^{O_2}_3,S^{O_3}_4 }(VS.Eval,G'',G,CP,hpk,U,state_{\tilde{A}})$, depending again on which of $Exp^{rtest}(1^K)$ and $Exp^{itest}(1^K)$ is being executed.
\end{itemize}
If $Exp^{rtest}(1^K)$ is being executed, then it can easily be seen that 
\[Pr[\tilde{D}(Exp^{rtes}(1^K),1^K)=1]=Pr[\tilde{D}(Exp^{extra}(1^K),1^K)=1. \] 
If $Exp^{itest}(1^K)$ is being executed, then it can easily be seen that 
\[Pr[\tilde{D}(Exp^{itest}(1^K),1^K)=1]=Pr[\tilde{D}(Exp^{ideal}(1^K),1^K)=1.\] 
But then, \eqref{eq:1} implies that 
\[|Pr[\tilde{D}(Exp^{rtest}(1^K),1^K)=1]-Pr[\tilde{D}(Exp^{itest}(1^K),1^K)=1]|>negl(K).\] 
which contradicts \eqref{eq:2}. Therefore, $Exp^{real}(1^K)$ and $Exp^{ideal}(1^K)$ are computationally indistinguishable.  
\end{proof}


\section{Open problems}

We have presented protocols that implement secure and trusted verification of a design, taking advantage of any extra
information about the structure of the design component interconnections that may be available. Although we show the
feasibility of such verification schemes, ours is but a first step, that leaves many questions open for future research.
\begin{itemize}
\item {\bf Improving efficiency} Our implementation uses FHE, which, up to the present, has been rather far from being
implemented in a computationally efficient way. On the other hand, garbled circuits are usually considered to be more 
efficient than FHE schemes; for example, \cite{huang2011faster} shows that garbled circuits were much more efficient
than a homomorphic encryption scheme in certain Hamming distance computations. Therefore, pursuing protocols based
on Yao's garbled circuits is a worthy goal, even if a more efficient garbled circuits construction is less secure. 
\item {\bf Verifiable computing} Although verifiable computing is not yet applicable to our case (as mentioned in the
Introduction), coming up with a method to hide the computation would provide a more efficient solution to the problem
of secure and trusted verification, since the amount of re-computation of results needed would be significantly reduced.  
\item {\bf Hiding the graph structure} Our work has been based on the assumption that the table graph $G_{struc}$
of a design is known. But even this may be a piece of information that the designer is unwilling to provide, since it
could still leak some information about the design. For example, suppose that the design uses an off-the-shelf subdesign
whose component structure is publicly known; then, by looking for this subgraph inside $G_{struc}$, someone can
figure out whether this subdesign has been used or not. In this case, methods of hiding the graph structure, by, e.g.,
node anonymization such as in \cite{zhou2008preserving}, \cite{cheng2010k}, may be possible to be combined with 
our or other methods, to provide more security.    
\item {\bf Public information vs. testing} The extra information we require in order to allow some white-box test case
generation by the verifier, namely the table graph structure, is tailored on specific testing algorithms 
(such as MC/DC \cite{hayhurst2001practical}),
which produce computation paths in that graph. But since there are other possibilities for test case generation, the
obvious problem is to identify the partial information needed for applying these test generation algorithms, and the
development of protocols for secure and trusted verification in these cases.  
\end{itemize}

\bibliographystyle{plain}
\bibliography{references}


\begin{appendices}

\section{Conversion of functions to circuits}   \label{sec:circuits}
 
We have mentioned above that in our construction the rhs functions in the tables of a table graph will be converted into boolean circuits in order for our construction to work properly. But we did not say how the conversion is done or whether this conversion is practical. Essentially what our construction requires is an automatic way to transform a higher level description of a function into a boolean circuit. Transformation of a higher level description of a function into a lower level description like a circuit was done in \cite{malkhi2004fairplay}, \cite{huang2011faster}. ~\cite{malkhi2004fairplay} introduces a system called FAIRPLAY which includes a high level language SFDL that is similar to C and Pascal. FAIRPLAY allows the developer to describe the function in SFDL, and automatically compiles it into a description of a boolean circuit. A similar system is described in \cite{huang2011faster}, and is claimed to be faster and more practical than FAIRPLAY. This system allows the developer to design a function directly in Java, and automatically compiles it into a description of a boolean circuit.

\section{Bit commitment protocols}   \label{app:bcp}

We are going to use the construction of a CMB protocol by Naor \cite{naor1991bit}.

Let $C \subset \{0,1\}^q$ be a code of $|C|=2^m$ words, such that the Hamming distance between any $c_1, c_2 \in C$ is at least $\epsilon q$, for some $\epsilon>0$. Let $E$ be an efficiently computable encoding function $E:\{0,1\}^m \rightarrow \{0,1\}^q$ that maps $\{0,1\}^m$ to $C$. It is also required that $q\cdot log(2/(2-\epsilon)) \geq 3K$ and $q/m=c$, where $c$ is a fixed constant. \\
$G$ denotes a pseudo-random generator $G$:$\{0,1\}^K \rightarrow \{0,1\}^{l(K)}$, $l(K)>K$ such that for all p.p.t. adversary A ,
\[|Pr[A(y)=1]-Pr[A(G(s))=1] < 1/p(K),\]
where the probabilities are taken over $y \in \{0,1\}^{l(K)}$ and seed $s \in \{0,1\}^K$ chosen uniformly at random.
$G_k(s)$ denotes the first $k$ bits of the pseudo-random sequence on seed $s \in \{0,1\}^K$ and $B_i(s)$ denotes the $i$th bit of the pseudo-random sequence on seed $s$.
For a vector $\vec R=(r_1,r_2,...,r_{2q})$ with $r_i \in \{0,1\}$ and $q$ indices $i$ such that $r_i=1$, $G_{\vec R}(s)$ denotes the vector $\vec A =(a_1,a_2,...,a_q)$ where $a_i=B_{j(i)}(s)$ and $j(i)$ is the index of the $i$th 1 in $\vec R$. If $e_1,e_2\in \{0,1\}^q$, then $e_1\oplus e_2$ denotes the bitwise Xor of $e_1$ and $e_2$.\\
Suppose Alice commits to $b_1,b_2,...,b_m$.
\begin{itemize}
\item \textbf{Commit Stage:}
\begin{enumerate}
\item  Bob selects a random vector $\vec R=(r_1,r_2,...,r_{2q})$ where $r_i \in \{0,1\}$ for $1 \leq i \leq 2q$ and exactly $q$ of the $r_i$'s are 1, and sends it to Alice. 
\item Alice computes $c=E(b_1,b_2,...,b_m)$, selects a seed $S \in \{0,1\}^n$ and sends to Bob $Enc_D$ which is the following: Alice sends Bob $e=c \oplus G_{\vec R}(s)$ (the bitwise Xor of $G_{\vec R}(s)$ and $c$), and for each $1 \leq i \leq 2q$ such that $r_i=0$ she sends $B_i(s)$. 
\end{enumerate}
\item \textbf{Reveal Stage:}
Alice sends s and $b_1,b_2,...,b_m$. Bob verifies that for all $1 \leq i \leq 2q$ such that $r_i=0$, Alice has sent the correct $B_i(s)$, computes $c=E(b_1,b_2,...,b_m)$ and $G_{\vec R}(s)$, and verifies that $e=c \oplus G_{\vec R}(s)$.
\end{itemize}

We denote $\vec R$ by $R$, the bits $b_1,b_2,...,b_m$ by $D$, the seed $s$ by $Cert$; a schematic diagram of the protocol can be found in Figure \ref{bitcommittable}.
\begin{figure}[h!]
\begin{center}
\includegraphics[width=0.48\textwidth]{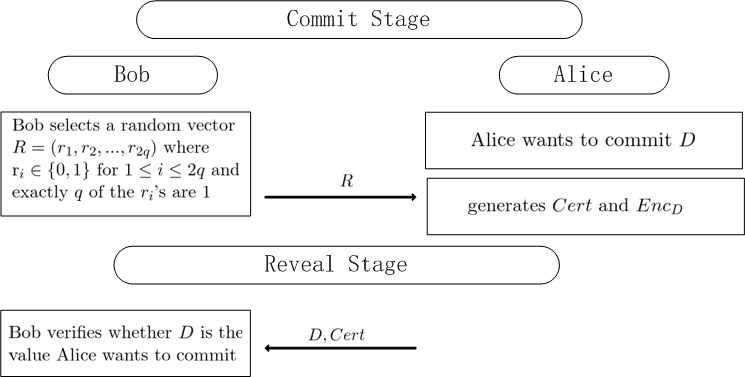}
\end{center}
\caption{The protocol of Naor's construction~\cite{naor1991bit}.}
\label{bitcommittable}
\end{figure}

\subsection{Use of BCP in Section \ref{outlinevs}}

The idea of a bit commitment protocol is similar to hiding information in an envelope \cite{kilian1989uses}. The developer runs $VS.Checker$ which does the following: First, given an input $Q_i=(i,p,y)$, $VS.Checker$ fully homomorphically decrypts $y$ and gets a value $d$. The developer puts $d$ in a sealed envelope and gives it to $V$. $V$ cannot open the envelope at this stage. After that, $VS.Checker$ needs $V$ to provide a proof that $y$ is indeed $FHE.Enc(SE.Enc(sk,c))$. The proof contains the secret key $sk$, but revealing $sk$ to $VS.Checker$ does not matter, because $VS.Checker$'s output is already given to $V$ and cannot change at this point. If $VS.Checker$ confirms that $y$ is indeed $FHE.Enc(SE.Enc(sk,c))$, then it generates a key to the sealed envelope, and the developer gives this key to $V$. Then $V$ can know $d$, using the key to open the envelope. If $VS.Checker$ figures out that $y$ is not $FHE.Enc(SE.Enc(sk,c))$, it simply refuses to generate the key to the envelope. Such an envelope solves the problem of mutual distrust and potential malicious activities by both entities. A bit commitment protocol can play the role of such an envelope.

\section{The protocol of Section \ref{sec:honest}}

Figure \ref{protocol2} describes what the developer and the verifier should do in Section \ref{sec:honest}.
\begin{figure}[h!]
\begin{center}
\includegraphics[width=\textwidth]{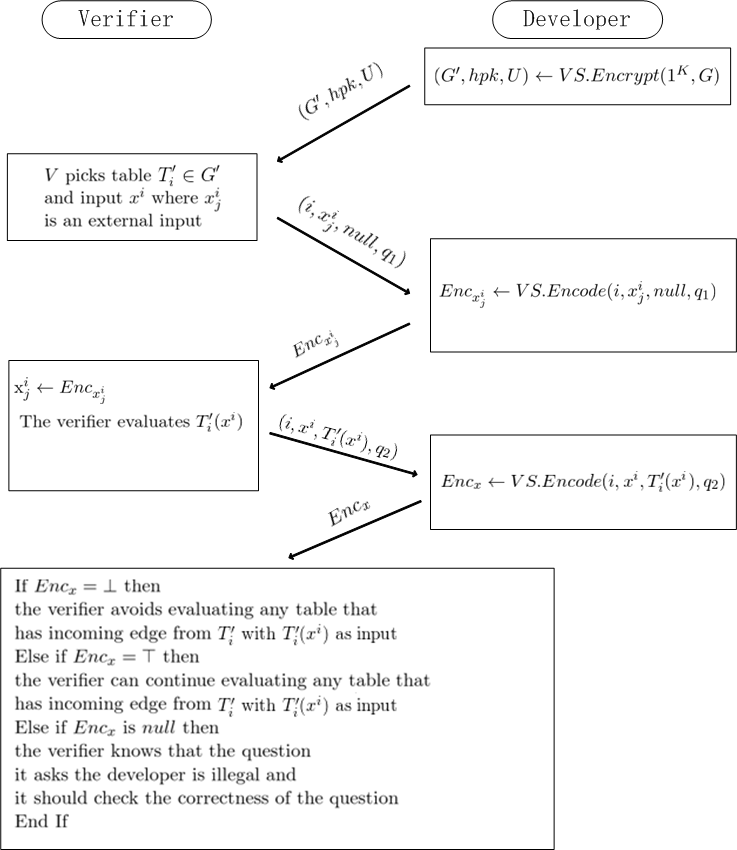}
\end{center}
\caption{The protocol of $VS$ in Section \ref{sec:honest}}
\label{protocol2}
\end{figure}

\section{An example for Section \ref{sec:honest}}

In this subsection we use an example in Figure \ref{DCinitial} to show how to apply our verification scheme to actually verify a specific table graph. In Figure \ref{DCinitial} there is an initial table graph that is to be verified by the verifier $V$.
\begin{figure}[h!]
\begin{center}
\includegraphics[width=\textwidth]{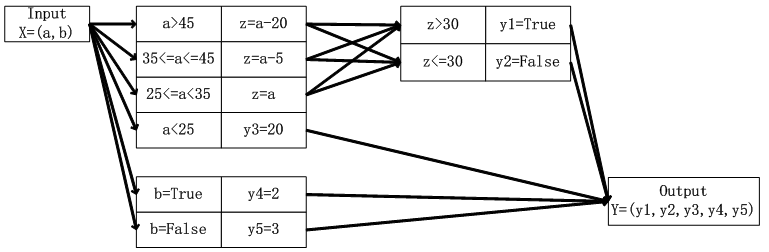}
\end{center}
\caption{An initial table graph $\mathcal{G}$ of an implementation }
\label{DCinitial}
\end{figure}

First the developer transforms this initial table graph into a table graph $G$ introduced in Section 2.2 (see Figure \ref{ourtablegraph}). Then\\ 
\begin{tabular}{ll}
$F_1(a) =  \left\{\begin{array}{ll}  (\top,a-20) & \text{, if } a>45 \\ (\bot,\bot) & ,otherwise             \end{array} \right.$  & $F_2(a) =  \left\{\begin{array}{ll}  (\top,a-5) & \text{, if } 35<=a<=45 \\ (\bot,\bot) & ,otherwise             \end{array} \right. $\\ [5ex]
$F_3(a) =  \left\{\begin{array}{lll}  (\top,a) & \text{, if } 25<=a<35 \\ (\bot,\bot) & ,otherwise             \end{array} \right. $ & $F_4(a) =  \left\{\begin{array}{lll}  (\top,20) & \text{, if } a<25 \\ (\bot,\bot) & ,otherwise             \end{array} \right. $\\ [5ex]
$F_5(z) =  \left\{\begin{array}{lll}  (\top,True) & \text{, if } z>30 \\ (\bot,\bot) & ,otherwise             \end{array} \right. $ & $F_6(z) =  \left\{\begin{array}{lll}  (\top,False) & \text{, if } z<=30 \\ (\bot,\bot) & ,otherwise             \end{array} \right. $\\ [5ex]
$F_7(b) =  \left\{\begin{array}{lll}  (\top,2) & \text{, if } b=True \\ (\bot,\bot) & ,otherwise             \end{array} \right. $ & $F_8(b) =  \left\{\begin{array}{lll}  (\top,3) & \text{, if } b=False \\ (\bot,\bot) & ,otherwise             \end{array} \right. $\\ [5ex]
\end{tabular}

The developer applies our content-secure verification scheme $VS$ to $G$. It runs $VS.Encrypt$ as follows.
$(G',hpk,U) \leftarrow VS.Encrypt(1^K,G)$. Figure \ref{DCencrypt} is the encrypted table graph $G'$.

\begin{figure}[h!]
\begin{center}
\includegraphics[width=\textwidth]{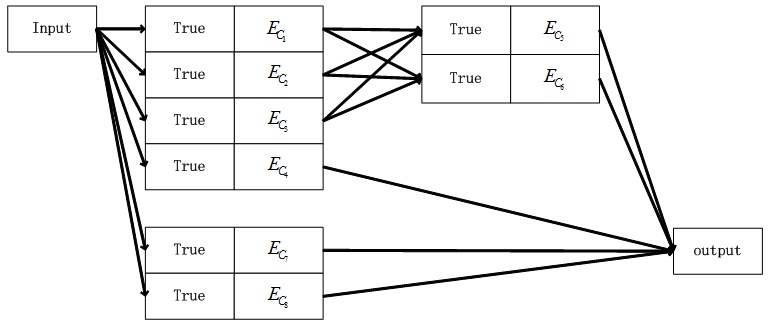}
\end{center}
\caption{An encrypted table graph $G'$ after applying $VS$ to $G$ }
\label{DCencrypt}
\end{figure}

According to our protocol, the verifier $V$ receives $G'$ and does the verification on $G'$. We show how $V$ can do MC/DC verification ~\cite{hayhurst2001practical} on $G'$. MC/DC performs structural coverage analysis. First it gets test cases generated from analysing a given program's requirements. Then it checks whether these test cases actually covers the given program's structure and finds out the part of the program's structure which is not covered.
First we assume the $VGA$ that $V$ uses will do MC/DC verification after it generates the test cases. Suppose $V$ runs $VGA$ to generate the test cases, based on requirements-based tests (by analysing $G^{spec}$), and these test cases are stored in EI. Then $V$ picks an external input $X$ to $G'$ from $EI$ and starts evaluating $G'$ with $X$.
 
For $X=(a=26,b=True)$, $V$ sends the following queries to the developer $DL$ (The queries are in the format of the input of $VS.Encode$): $Q_1=(1,(\top,46),null,q_1)$,  $Q_2=(2,(\top,46),null,q_1)$, $Q_3=(3,(\top,46),null,q_1)$, $Q_4=(4,(\top,46),null,q_1)$, $Q_5=(7,(\top,True),null,q_1)$, $Q_6=(8,(\top,True),null,q_1)$, because it needs to evaluate $PT'_1$, $PT'_2$, $PT'_3$, $PT'_4$, $PT'_7$, $PT'_8$ as well as an encoding for the external inputs of each table. We take the evaluation of the path (Input $\rightarrow PT'_1 \rightarrow PT'_5 \rightarrow$ Output) as an example. For query $Q_1$, $DL$ evaluates $VS.Encode(Q_1)$ and returns $FHE.Enc(hpk,46)$($FHE.Enc(hpk,46)$ is the output of $VS.Encode(Q_1)$), which is the input $x^1$ to $PT'_1$. Then $V$ runs $FHE.Eval(hpk,U,x^1,E_{C_1})$ and outputs $PT'_1(x^1)$. After this $V$ sends $(1,x^1,PT'_1(x^1),q_2)$ to $DL$. Because we know that for $PT_1 \in G$, if 46 is the input to $PT_1$, then the output will be $(\top,26)$. Thus for the query $(1,x^1,PT'_1(x^1),q_2)$, $DL$ evaluates $VS.Encode(1,x^1,PT'_1(x^1),q_2)$ and returns $\top$. Hence $V$ knows that for a=46 as an external input, the lhs predicate (a decision and condition) of $PT_1 \in G$ is satisfied, and the rhs function of $PT_1 \in G$ is covered. 

After finishing evaluating $PT'_1$, $V$ starts evaluating $PT'_5$ and $PT'_6$ with $PT'_1(x^1)$ as their input. $x^5=PT'_1(x^1)$ is $PT'_5$'s input. After finishing evaluating $PT'_5$, $V$ gets $PT'_5(x^5)$ as the output. Then $V$ sends $(5,x^5,PT'_5(x^5,q_2)$ to $DL$. $DL$ evaluates $VS.Encode(5,x^5,PT'_5(x^5,q_2)$ and $VS.Encode$'s output is $True$ (Because $(46>30)$, $PT_5$'s output is $(\top,True)$. We also know that the output of $PT'_5$ is an external output. Accordingly, $VS.Encode$ outputs $True$). Therefore, $DL$ returns $True$ to $V$. Then $V$ knows that $y1=True$ as well as the fact that the lhs predicate (a decision and condition) of $PT_5 \in G$ is satisfied and the rhs function of $PT_5 \in G$ is covered.

After evaluating $G'$ with $X=(a=26,b=True)$ by similar steps as described above and getting the external output $Y=(True,\bot,\bot,2,\bot)$, $V$ knows that for $X$, the lhs predicates of $PT_1$, $PT_5$ and $PT_7$ are satisfied while the rest tables' lhs predicates are not satisfied. Hence, $V$ knows that the predicates of $PT_1$, $PT_5$ and $PT_7$ are $True$ while the predicates in the rest tables of $G$ are $False$ and the statements (the rhs functions of the tables in $G$) of $PT_1$, $PT_5$ and $PT_7$ are covered. Moreover, $V$ compares $Y$ with $G^{spec}(X)$ to see if $G$ behaves as expected with $X$ as an external input. 

$V$ will keep evaluating $G'$ with the rest external inputs in $EI$, and by interacting with $DL$ in the way as described above, it does the structural coverage analysis of the requirements-based test cases. He will be able to know whether the external inputs in $EI$ covers every predicates in $G$. Additionally, it will be able to know whether $G$ behaves as expected in the requirements specification described by $G^{spec}$.

\section{The protocol of Section \ref{sec:maldev}}

Figure \ref{protocol2} describes what the developer and the verifier should do in Section \ref{sec:maldev}.
\begin{figure}[h!]
  \begin{center}
  \includegraphics[width=\textwidth]{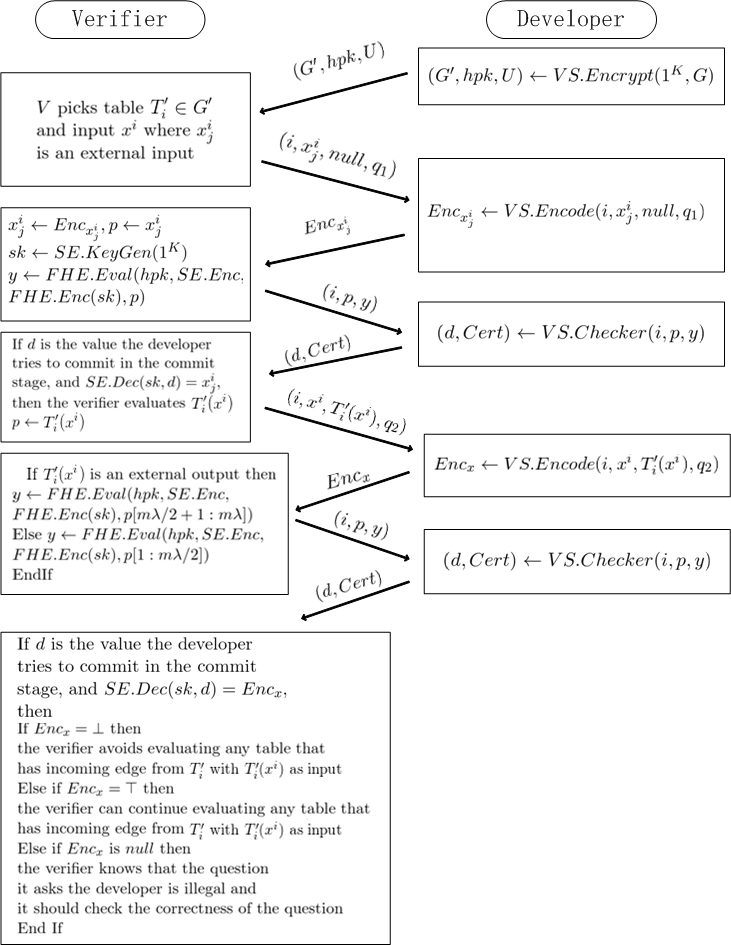}
  \end{center}
  \caption{The protocol of Section \ref{sec:maldev}}
  \label{protocoldefinition}
\end{figure}

\end{appendices}

\end{document}